\theoremstyle{plain}
\newtheorem{thm}{Theorem}[section]
  \theoremstyle{definition}
  \newtheorem{defn}[thm]{Definition}
  \theoremstyle{plain}
  \newtheorem{prop}[thm]{Proposition}
  \theoremstyle{plain}
  \newtheorem{lem}[thm]{Lemma}
 \theoremstyle{definition}
  \newtheorem{example}[thm]{Example}
\newcommand{\ignore}[1]{}
\begin{document}

\title{Oblivious Algorithms for the Maximum Directed Cut Problem}

\author{Uriel Feige\thanks{Weizmann Institute of Science, Rehovot, Israel. Email: uriel.feige@weizmann.ac.il. The author holds the Lawrence
G. Horowitz Professorial Chair at the Weizmann Institute. Work
supported in part by The Israel Science Foundation (grant No.
873/08).}~ and Shlomo Jozeph\thanks{Weizmann Institute of Science,
Rehovot, Israel. Email: shlomo.jozeph@weizmann.ac.il. Work
supported in part by The Israel Science Foundation (grant No.
873/08).}}

\maketitle

\thispagestyle{empty}

\begin{abstract}

This paper introduces a special family of randomized algorithms for Max DICUT that
we call oblivious algorithms. Let the bias of a vertex be the ratio between the total
weight of its outgoing edges and the total weight of all its edges. An oblivious algorithm
selects at random in which side of the cut to place a vertex v, with probability that only
depends on the bias of v, independently of other vertices. The reader may observe that
the algorithm that ignores the bias and chooses each side with probability 1/2 has an
approximation ratio of 1/4, whereas no oblivious algorithm can have an approximation
ratio better than 1/2 (with an even directed cycle serving as a negative example). We
attempt to characterize the best approximation ratio achievable by oblivious algorithms,
and present results that are nearly tight. The paper also discusses natural extensions of
the notion of oblivious algorithms, and extensions to the more general problem of Max
2-AND.

\end{abstract}

\newpage

\section{Introduction}

\setcounter{page}{1}

Given a directed graph $G=\left(V,E,w\right)$ where
$w:E\to\mathbb{R}^{+}$ is a weight function, and a set $S\subseteq
V$, the \emph{weight} of the directed cut defined by $S$ is the
sum of $w\left(\left(u,v\right)\right)$ with $u\in S$, $v\notin
S$. The goal of the \emph{maximum directed cut} (Max DICUT)
problem is finding a set such that the weight of the respective
directed cut is as large as possible. The \emph{relative weight}
of a cut is the weight of the cut divided by the sum of the
weights of all edges.

The \emph{maximum cut} (Max CUT) problem is a similar problem; $G$
is undirected and the cut contains those edges with exactly one
endpoint in $S$. Max CUT can be seen as a restriction of Max DICUT
with two additional conditions: $\left(u,v\right)\in E$ iff
$\left(v,u\right)\in E$ and every two antisymmetric edges have the
same weight. Except in section \ref{Related-Work}, the term
{}``cut'' will mean directed cut, all graphs will be directed
graphs, and, unless stated otherwise, all graphs will be weighted
graphs.

Given a set of boolean variables $V$, a 2-AND formula is a set of
clauses $C$, where each clause is a conjunction of two different
literals (where a literal is a variable with either positive or
negative polarity). Given a nonnegative weight function
$w:C\to\mathbb{R}^{+}$ over the clauses, the \emph{weight} of an
assignment for the variables is the sum of weights of satisfied
clauses. \emph{Max 2-AND} is the problem of finding an assignment
with maximum weight in a 2-AND formula.

Max DICUT is a special case of the Max 2-AND: Given a graph
$G=\left(V,E,w\right)$, the set of variables will be $V$ and each
edge will define a constraint that is true iff the first vertex is
selected (the corresponding variable is true) and the second
vertex is not selected (the corresponding variable is false).

\begin{defn}An edge $\left(u,v\right)$ is an \emph{inedge} for $v$
and an \emph{outedge} for $u$. The \emph{outweight} 
of a vertex is the sum of the weight of its outedges and the \emph{inweight}
of a vertex is the sum of the weight of its inedges.\end{defn}
\begin{defn}
The \emph{bias} of a vertex is its outweight divided by the sum of
its outweight and its inweight. The bias of a variable is the
weight of the clauses in which it appears positively divided by
the total weight of the clauses it appears in.

An \emph{oblivious algorithm} for Max DICUT selects each vertex to
be in $S$ with some probability that depends only on its bias and
the selection of each vertex is independent of whether other
vertices are selected. Similarly, an oblivious algorithm for Max
2-AND selects each variable to be true with some probability that
depends only on its bias. The \emph{selection function} of an
oblivious algorithm is the function that maps a vertex's (or
variable's) bias to the probability it is selected. All vertices
must use the same selection function.
\end{defn}
Note that a selection function uniquely determines an oblivious
algorithm, so there will be no distinction between them in the
text.

It will be assumed that the probabilities of selecting a vertex
(or a variable) are antisymmetric. That is, if $f$ is the
selection function of an oblivious algorithm then for all biases
$x\in\left[0,1\right]$, $f\left(x\right)+f\left(1-x\right)=1$, or
equivalently, $f\left(1-x\right)=1-f\left(x\right)$.
This assumption seems natural, since with it, oblivious algorithms are invariant to reversing the direction of all edges of the graph. The assumption will be used in Section
\ref{sec:Equivalence-of-Expectation} and to get a better upper bund on the approximation ratio of oblivious algorithms.

The \emph{approximation ratio} of an oblivious algorithm on a
\emph{specific} graph is the expected weight of the cut produced
by the algorithm divided by the weight of the optimal cut. The
approximation ratio of an oblivious algorithm is the infimum of
the approximation ratios on all graphs. The approximation ratio of
an oblivious algorithm for max 2-AND is defined similarly. The
approximation ratio of an oblivious algorithm will be used as a
measure for the quality of the algorithm.

An oblivious algorithm with positive approximation ratio  must be random. Otherwise, in a graph where
all neighborhoods look the same, such as a cycle, all vertices will
belong to $S$ or no vertices will belong to $S$, so the weight of
the cut will be 0.

We are primarily interested in oblivious algorithms, but we will also
discuss two ways of using finite sets of oblivious algorithms. One
is a \emph{mixed} oblivious algorithm, that is, choosing an algorithm
to use from the set according to some (fixed) probability distribution.
The other is \emph{max} of oblivious algorithms, that is, using all
the algorithms in the set to generate cuts and outputting the cut
with the maximal weight.

The approximation ratio for a mixed algorithm is its expected
approximation ratio (where expectation is taken both over the
choice of oblivious algorithm from the set, and over the
randomness of the chosen oblivious algorithm).

There are two natural ways to define the approximation ratio of a
max algorithm: either using \emph{maxexp} -- the maximum (over all
oblivious algorithms in the set) of the expected weight of the
cut, or using \emph{expmax} -- the expectation of the weight of
the maximum cut.
Observe that maxexp cannot be better than expmax, but expmax can
be better than maxexp. For example, assume the set is a multiset
containing a single algorithm multiple times. Then, expmax is
equal to the approximation ratio of the algorithm, but maxexp may
be better. However, it will be shown that the worst case
approximation ratio when using expmax is the same as maxexp.

\subsection{\label{Related-Work}Related work}


Our notion of oblivious algorithms can be viewed as a restricted special
case of the notion of local algorithms used in distributed
computing, which have been studied due to their simplicity, running time,
and other useful characteristics  \cite{S10}.

The uniformly random algorithm selects each vertex (or sets each
variable to true) independently with probability $\frac{1}{2}$. It
gives a $\frac{1}{4}$ approximation to Max 2-AND and a
$\frac{1}{2}$ approximation to Max CUT. There are algorithms that
use semidefinite programming to achieve about 0.874 approximation
to Max 2-AND \cite{LLZ02} and about 0.878 approximation to Max CUT
\cite{GW95}. Assuming the Unique Games Conjecture, these
algorithms are optimal for Max CUT \cite{KKMO04,MOO05}, and nearly
optimal for Max 2-AND (which under this assumption is hard to
approximate within 0.87435 \cite{A07}). Earlier NP-Hardness
results are $\frac{11}{12}$ for Max 2-AND and $\frac{16}{17}$ for
Max CUT~\cite{H01}.

Trevisan \cite{T98} shows how to get $\frac{1}{2}$ approximation
to Max 2-AND using randomized rounding of a linear program.
Halperin and Zwick \cite{HZ01} show simple algorithms that achieve
$\frac{2}{5}$ and $\frac{9}{20}$ approximation ratios, and a
combinatorial algorithm that finds a solution to the previous
linear program.

Bar-Noy and Lampis \cite{BnL09} present an online version of Max
DICUT for acyclic graphs. Vertices are revealed in some order
(respecting the order defined by the graph), along with their
inweight, outweight, and edges to previously revealed vertices,
and based on this information alone they are placed in either side
of the cut. They show that an algorithm achieving an approximation
ratio of $\frac{2}{3\sqrt{3}}$ is optimal against an adaptive
adversary. They also show that derandomizing the uniformly random
algorithm gives an approximation ratio of $\frac{1}{3}$. Oblivious
algorithms can be used in online settings, and in fact, they do
not require the graph to be acyclic and do not require edges to
previously revealed vertices to be given. The reason why the
approximation ratios in the current manuscript are better than
$\frac{2}{3\sqrt{3}}$ is that our approximation ratio is computed
against an oblivious adversary.

Alimonti shows a local search algorithm that achieves an
approximation ratio of $\frac{1}{4}$ for Max 2-AND \cite{A96}, and
uses non-oblivious local search to achieve a $\frac{2}{5}$
approximation \cite{A97}.

Alon et al. \cite{ABGLS07} show that the minimal relative weight
of a maximum directed cut in acyclic unweighted graphs is
$\frac{1}{4}+o\left(1\right)$. Lehel, Maffray and Preissmann
\cite{LMP09} study the minimal weight of a maximum directed cut
(in unweighted graphs) where the indegree or outdegree of all
vertices is bounded. They show that the smaller the degree the
larger the maximum cut. If the indegree or outdegree is 1 for all
vertices, the minimal relative weight is $\frac{1}{3}$. If the
graph also has no directed triangles, the minimal relative weight
is $\frac{2}{5}$.

Feige, Mirrokni and Vondrak \cite{FMV07} show an algorithm that
achieves a $\frac{2}{5}$ approximation to any nonnegative
submodular function (and directed cut is a special case of a
submodular function).


\subsection{Our results}
The main results of the paper are theorems \ref{0.483 approximation} and
\ref{upper bound}, that show that there is an oblivious algorithm that achieves
an approximation ratio of 0.483, but no oblivious algorithm can achieve an
approximation ratio of 0.4899. In the process of proving these theorems,
a few other interesting results are shown.

Max DICUT is a special case of Max 2-AND, and hence approximation
algorithms for Max 2-AND apply to Max DICUT as well. The following
theorem shows a converse when oblivious algorithms are concerned.

\begin{thm}
\label{thm:Equality-of-Expectation}Given any antisymmetric selection function
$f$, the approximation ratio of the corresponding oblivious
algorithm for Max 2-AND is the same as that for Max DICUT.
\end{thm}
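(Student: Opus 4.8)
The plan is to prove the two inequalities between the approximation ratios separately. One direction is free: every Max DICUT instance is already a Max 2-AND instance (interpret the edge $(u,v)$ as the clause $x_{u}\wedge\bar{x}_{v}$), the biases coincide, and $f$ induces the same distribution over cuts/assignments; hence the approximation ratio of $f$ for Max 2-AND, being an infimum over a larger class of instances, is at most the approximation ratio of $f$ for Max DICUT. All the work is in the reverse inequality.

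For that, I would fix an arbitrary 2-AND formula $I$ (assuming $\mathrm{OPT}(I)>0$, and deleting the trivially false clauses $x_{a}\wedge\bar{x}_{a}$ so that every clause uses two distinct variables and the bias of $a$ equals $P_{a}/(P_{a}+N_{a})$, where $P_{a},N_{a}$ are the total weights of clauses containing $a$ positively, resp.\ negatively) and build a Max DICUT instance $I'$ that is ``at least as bad'' for $f$. Put two vertices $a^{+},a^{-}$ in $I'$ for every variable $a$ of $I$, with ``associated literals'' $x_{a}$ and $\bar{x}_{a}$. For each clause $c=\ell_{1}\wedge\ell_{2}$ of weight $w$, and for \emph{both} orderings $(\ell_{1},\ell_{2})$ and $(\ell_{2},\ell_{1})$ of its literals, I would add an edge of weight $w/2$: to an ordered literal pair $(\lambda,\mu)$ associate the edge $(s,t)$ whose tail $s$ is the vertex with associated literal $\lambda$ and whose head $t$ is the vertex with associated literal $\bar{\mu}$, so that the edge-clause $x_{s}\wedge\bar{x}_{t}$ is exactly $\lambda\wedge\mu$ (merging parallel edges by adding weights; variables not occurring in any clause are dropped).

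I would then verify three facts about $I'$. First, \emph{biases are preserved}: since each clause enters in both orderings with equal weight, the vertex $a^{+}$ is a tail precisely for the clauses containing $x_{a}$ (when $x_{a}$ is listed first) and a head precisely for the clauses containing $\bar{x}_{a}$ (when $\bar{x}_{a}$ is listed second), and summing the half-weights gives $\mathrm{outweight}(a^{+})=P_{a}/2$ and $\mathrm{inweight}(a^{+})=N_{a}/2$, so $a^{+}$ has the bias of $a$, and symmetrically $a^{-}$ has the complementary bias. Second, \emph{the expected cut weight produced by $f$ on $I'$ equals the expected weight produced by $f$ on $I$}: by the bias preservation and the antisymmetry identity $1-f(b)=f(1-b)$, each of the two edges coming from $c=\ell_{1}\wedge\ell_{2}$ is cut with probability equal to (probability $\ell_{1}$ is satisfied)$\cdot$(probability $\ell_{2}$ is satisfied), computed exactly as in $I$, so the two half-weight edges together reproduce the contribution of $c$ (using independence, which holds as $\ell_1,\ell_2$ lie on distinct variables). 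Third, \emph{$\mathrm{OPT}(I')\ge\mathrm{OPT}(I)$}: an assignment $\sigma$ for $I$ lifts to the cut $S'=\{a^{+}:\sigma(a)=\mathrm{true}\}\cup\{a^{-}:\sigma(a)=\mathrm{false}\}$, under which both edges derived from $c$ are cut exactly when $\sigma$ satisfies $c$. Writing $A(\cdot)$ for the expected cut weight of $f$, these give
\[
\frac{A(I')}{\mathrm{OPT}(I')}=\frac{A(I)}{\mathrm{OPT}(I')}\le\frac{A(I)}{\mathrm{OPT}(I)},
\]
i.e.\ $f$'s ratio on $I'$ is at most its ratio on $I$; taking the infimum over all $I$ completes the reverse inequality.

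The step I expect to be the main obstacle is the first fact, namely engineering the reduction so that the biases of the new vertices exactly equal the biases of the original variables. With a naive reduction the vertex $a^{+}$ would inherit in- or out-weight from clauses in which $a$ occurs with the ``wrong'' polarity, corrupting its bias and breaking the clause-by-clause matching of expectations; inserting every clause with both literal orderings and half weight is precisely the device that makes the in/out-weight accounting split symmetrically and collapse to $P_{a}/2$ and $N_{a}/2$. Once that is arranged, the remaining two facts are straightforward per-clause checks, and the only side conditions to keep track of are the harmless ones noted above (positive weights, $\mathrm{OPT}(I)>0$, no identically-false clauses).
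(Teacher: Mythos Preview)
Your proposal is correct and follows essentially the same approach as the paper. The construction is identical: introduce a vertex for each literal, and for every clause $\ell_{1}\wedge\ell_{2}$ of weight $w$ add the two edges $(\ell_{1},\bar{\ell}_{2})$ and $(\ell_{2},\bar{\ell}_{1})$, each of weight $w/2$; then check (i) biases are preserved, (ii) expected weights coincide, and (iii) the optimal cut is at least the optimal assignment. Your write-up is in fact slightly more explicit than the paper's in two places---you spell out the trivial direction (Max DICUT $\subseteq$ Max 2-AND) and you pinpoint exactly where antisymmetry is invoked (in matching $\Pr[\bar{z}\notin S]$ with $\Pr[z\text{ true}]$)---but these are expository refinements, not a different argument.
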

Hence our results concerning oblivious algorithms for Max DICUT
extend to Max 2-AND. We remark that for general approximation
algorithms, it is not known whether Max 2-AND can be approximated
as well as Max DICUT (see~\cite{A08} for example).

The following rather standard proposition justifies the use of
expected approximation ratio as the measure of quality of a
randomized approximation algorithm. It also holds for {\em mix}
and {\em max} algorithms, and implies that expmax is not
better than maxexp in the worst case.
\begin{prop}
Given a graph $G$ and $\epsilon>0$, there is another graph $G_{\epsilon}$
such that with high probability, for any oblivious algorithm, with high probability, the
weight of the cut produced by running the algorithm on $G_{\epsilon}$
is close to the expected weight of the cut on $G$ up to $\pm\epsilon$,
and the weight of the optimal cut of both graphs is the same.\end{prop}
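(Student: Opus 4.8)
The plan is to build $G_\epsilon$ from $G$ by taking many disjoint copies of $G$ and "planting" them into a structure that does two things simultaneously: it forces every bias in $G_\epsilon$ to be drawn from a fixed finite (or at least controlled) set, so that concentration arguments apply uniformly over all oblivious algorithms at once, and it makes the per-copy contribution to the cut a sum of nearly-independent bounded random variables. Concretely, I would first rescale the weights of $G$ so that the total weight is $1$, and then take $N$ disjoint copies $G^{(1)},\dots,G^{(N)}$ for $N$ large (polynomial in $1/\epsilon$ and the relevant parameters of $G$). An oblivious algorithm $f$ run on the disjoint union places each vertex independently; the biases of vertices are exactly their biases in $G$ (disjointness does not change outweight or inweight), so the expected cut weight on $G_\epsilon$ is exactly $N$ times the expected cut weight on $G$, and the optimal cut of $G_\epsilon$ is $N$ times that of $G$. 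After normalizing back down by $N$, the expected relative performance is identical.

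The second step is the concentration argument. Fix an oblivious algorithm $f$. Let $W_i$ be the weight of the cut that $f$ produces inside copy $G^{(i)}$; the $W_i$ are i.i.d., bounded in $[0, w_{\max}]$ where $w_{\max}$ is the total edge weight of a single (normalized) copy, and $\mathbb{E}[W_i] = \mathrm{OPT}_f(G)$, the expected cut weight of $f$ on $G$. By Hoeffding's inequality, $\Pr\bigl[\,|\tfrac{1}{N}\sum_i W_i - \mathbb{E}[W_i]| > \epsilon\,\bigr] \le 2\exp(-2N\epsilon^2 / w_{\max}^2)$, which is exponentially small in $N$. Thus for a single fixed $f$ we get the desired "with high probability the cut weight is within $\pm\epsilon$ of its expectation." Scaling $N$ up by a constant factor makes this probability as close to $1$ as desired.

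The main obstacle — and the reason the proposition is phrased with the slightly awkward doubled "with high probability" — is that the statement must hold for \emph{any} oblivious algorithm, i.e. uniformly over the uncountable family of selection functions $f:[0,1]\to[0,1]$, whereas Hoeffding only handles one $f$ at a time. The key observation that rescues this is that an oblivious algorithm only "sees" the biases that actually occur in $G_\epsilon$, and since $G$ is a fixed finite graph, only finitely many distinct bias values occur; call them $b_1,\dots,b_k$. The behavior of $f$ on $G_\epsilon$ is therefore determined entirely by the tuple $(f(b_1),\dots,f(b_k)) \in [0,1]^k$. One can then either (i) take a fine net over $[0,1]^k$, apply the union bound over the net, and argue that nearby tuples produce cut-weight distributions that are close in the appropriate sense (a coupling/monotonicity argument: perturbing $f(b_j)$ by $\delta$ changes each vertex's placement with probability $\le\delta$, and each such change affects the cut by a bounded amount), or (ii) note that the event "the realized cut weight is within $\epsilon$ of its expectation" can be controlled by a martingale/Azuma argument whose bound depends only on $k$ and $w_{\max}$, not on $f$. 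Either route gives a single choice of $N$ that works for all $f$ at once. I would present route (i): fix a net of granularity $\delta = \Theta(\epsilon / (\text{number of edges in }G))$, union-bound Hoeffding over the $(1/\delta)^k$ net points, and absorb the perturbation error into the $\pm\epsilon$ slack. The remaining details — that "optimal cut weight of both graphs is the same" after the $1/N$ normalization, and that the two occurrences of "with high probability" are just the net-union-bound probability and the per-algorithm Hoeffding probability multiplied together — are routine.
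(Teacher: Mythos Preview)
Your construction is exactly the paper's: take many disjoint copies of $G$ with weights scaled by $1/N$, so the optimal cut value is preserved, and invoke concentration (the paper just says ``law of large numbers''; you use Hoeffding, which is the same thing made quantitative). That is the whole proof as the paper presents it.

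Your extra work on uniformity over all selection functions is more than the paper does and more than the statement requires. Read the proposition as ``for every oblivious algorithm $f$, with high probability the realized cut weight on $G_\epsilon$ is within $\epsilon$ of the expectation''; the quantifier over $f$ sits \emph{outside} the probability. Your own Hoeffding bound $2\exp(-2N\epsilon^2/w_{\max}^2)$ already does not depend on $f$, so it holds uniformly and no $\epsilon$-net over $[0,1]^k$ is needed. (If one did want the stronger statement $\Pr[\forall f:\text{close}]\ge 1-o(1)$, your observation that only the finitely many biases $b_1,\dots,b_k$ of $G$ matter, together with your route~(ii) via McDiarmid/Azuma with bounded differences $O(1/N)$, would be the clean way to get it; the net-plus-coupling route~(i) also works but is fussier.) In short: correct, same idea as the paper, with some unnecessary but sound elaboration.
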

\begin{proof}
[Proof (Sketch)] This proposition follows immediately from the law
of large numbers and a graph composed of many disjoint copies of the original
graph (with the weight of edges normalized to be the same as for the original
graph).
\end{proof}
When using the same set of algorithms, the {\em max} algorithm is
not worse than any {\em mixed} algorithm. The following theorem
shows that the converse holds for some mixed algorithm.

\begin{thm}
\label{best=00003Dmixed}Given a finite set of algorithms, there is
a mixed algorithm over the set such that the worst case approximation
ratio is as good as that of the max algorithm of the set.
\end{thm}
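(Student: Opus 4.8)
The plan is to recast the claim as a statement about a zero-sum game and apply a minimax theorem, the only genuinely combinatorial ingredient being that the set of achievable approximation-ratio vectors is convex. Enumerate the given set as $A_1,\dots,A_k$, and for a weighted graph $G$ with at least one edge (so $\mathrm{OPT}(G)>0$) let $r_i(G)=\mathbb{E}[\text{weight of the cut of }A_i\text{ on }G]/\mathrm{OPT}(G)\in[0,1]$ and $r(G)=(r_1(G),\dots,r_k(G))$; note that $r_i(G)$ is unchanged if all edge weights of $G$ are multiplied by a common positive constant. By the Proposition above, the worst-case approximation ratio of the \emph{max} algorithm over $\{A_1,\dots,A_k\}$ — under either the \emph{maxexp} or the \emph{expmax} convention — equals $\rho^\ast:=\inf_G\max_{1\le i\le k}r_i(G)$. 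The expected approximation ratio on $G$ of the mixed algorithm that runs $A_i$ with probability $p_i$ is $\sum_i p_i r_i(G)=\langle p,r(G)\rangle$, so it suffices to produce a probability vector $p$ with $\inf_G\langle p,r(G)\rangle\ge\rho^\ast$.

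The first step is to show that $R:=\{\,r(G):G\text{ a weighted graph with an edge}\,\}\subseteq[0,1]^k$ is convex. Given graphs $G_1,G_2$ and $t\in(0,1)$, rescale the weights of $G_1$ by $t/\mathrm{OPT}(G_1)$, rescale the weights of $G_2$ by $(1-t)/\mathrm{OPT}(G_2)$, and let $G$ be the disjoint union of the rescaled graphs. Because the algorithms are oblivious — the selection probability of a vertex depends only on its bias, a local quantity — each $A_i$ acts on the two parts of $G$ independently; moreover an optimal cut of a disjoint union is the union of optimal cuts of the parts, so $\mathrm{OPT}(G)=t+(1-t)=1$ and $\mathbb{E}[\text{cut of }A_i\text{ on }G]=t\,r_i(G_1)+(1-t)\,r_i(G_2)$. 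Hence $r(G)=t\,r(G_1)+(1-t)\,r(G_2)\in R$, so $R$ is convex. Let $\overline R$ be its closure, a convex compact subset of $[0,1]^k$; since $v\mapsto\max_i v_i$ is continuous, $\rho^\ast=\min_{v\in\overline R}\max_i v_i$, and for any $p$ we have $\inf_G\langle p,r(G)\rangle=\min_{v\in\overline R}\langle p,v\rangle$.

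Let $\Delta$ be the probability simplex in $\mathbb{R}^k$. The map $(p,v)\mapsto\langle p,v\rangle$ is bilinear and $\Delta,\overline R$ are convex and compact, so Sion's minimax theorem gives
\[
\max_{p\in\Delta}\min_{v\in\overline R}\langle p,v\rangle \;=\; \min_{v\in\overline R}\max_{p\in\Delta}\langle p,v\rangle \;=\; \min_{v\in\overline R}\max_{1\le i\le k} v_i \;=\; \rho^\ast ,
\]
where the middle equality holds because a linear function on $\Delta$ attains its maximum at a vertex. Choosing $p^\ast$ attaining the outer maximum on the left, the mixed algorithm with distribution $p^\ast$ has worst-case approximation ratio $\inf_G\langle p^\ast, r(G)\rangle=\min_{v\in\overline R}\langle p^\ast,v\rangle=\rho^\ast$, that is, exactly the worst-case ratio of the max algorithm, proving the theorem. (It cannot exceed $\rho^\ast$ either, since on a graph $G$ with $\max_i r_i(G)$ close to $\rho^\ast$ one has $\langle p^\ast,r(G)\rangle\le\max_i r_i(G)$.)

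The main obstacle is the convexity of $R$: everything hinges on the disjoint-union-plus-rescaling construction genuinely realizing arbitrary convex combinations of ratio vectors, which needs each algorithm in the set to treat the parts of a disjoint union independently and needs $\mathrm{OPT}$ to be additive over parts. For oblivious algorithms this is immediate because biases, hence selection probabilities, are unaffected by adjoining disjoint components, and the same reasoning covers mixed algorithms built from the set; a broader class of ``algorithms'' would require re-examining this point. The remaining details — scale-invariance of the ratios, and passing to the closure so that the relevant infima are attained and the minimax theorem applies — are routine.
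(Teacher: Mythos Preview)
Your proof is correct and follows essentially the same approach as the paper: both set up a two-player zero-sum game between the finite set of algorithms and all graphs, and both rely on the disjoint-union-with-rescaling construction to collapse the graph player's mixed strategies to pure ones (equivalently, to make the set of ratio vectors convex). The only difference is packaging: the paper invokes a version of the minimax theorem for an infinite pure-strategy space (Wald) and then argues the approximation, whereas you pass to the closure $\overline{R}$ and apply Sion's minimax on compact convex sets, which yields the exact equality $\max_{p}\min_{v}\langle p,v\rangle=\rho^\ast$ in one clean step rather than ``up to arbitrary precision.''
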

$f:\left[0,1\right]\to\left[0,1\right]$ is a \emph{step function}
if there are $0=z_{0}<z_{1}<\cdots<z_{n}<z_{n+1}=1$ such that $f$
is constant on $\left(z_{i},z_{i+1}\right)$. We  first show a simple step
function that has an approximation ratio that is better than the trivial oblivious algorithm. Unlike the proof of theorem \ref{0.483 approximation}, which is computer assisted, we show a complete analytic proof of the following result.\begin{thm}
\label{thm:There-Steps}There is step function with three steps
such that the corresponding oblivious algorithm has approximation
ratio $\frac{3}{8}$.
\end{thm}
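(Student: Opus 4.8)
The plan is to take $f$ to be the antisymmetric three–step function with $f(x)=0$ on $[0,\tfrac13)$, $f(x)=\tfrac12$ on $[\tfrac13,\tfrac23]$, and $f(x)=1$ on $(\tfrac23,1]$, and to prove its approximation ratio equals $\tfrac38$. For the ``$\le\tfrac38$'' direction I would point to the two–vertex weighted graph with an edge of weight $2$ from $u$ to $v$ and an edge of weight $1$ from $v$ to $u$: there $u$ has bias $\tfrac23$ and $v$ has bias $\tfrac13$, both go into $S$ with probability $\tfrac12$, the expected cut weight is $2\cdot\tfrac14+1\cdot\tfrac14=\tfrac34$ while the optimum is $2$; scaling into many disjoint copies (as in the proposition above) makes this an honest instance. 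The threshold $\tfrac13$ is the right one — smaller thresholds already fail on the analogous boundary two–vertex examples, and larger ones fail on the Low$\to$Low and High$\to$High configurations discussed below.

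For the ``$\ge\tfrac38$'' direction it suffices, by Theorem~\ref{thm:Equality-of-Expectation}, to prove $\mathrm{E}_f[\mathrm{cut}]\ge\tfrac38\,\mathrm{OPT}$ on every weighted directed graph; I would fix an optimal cut $S^{*}$ and build the bound from three pieces. First, a bias ceiling on the optimum: writing $d_v$ for the total weight at $v$, a vertex in $S^{*}$ contributes to the cut at most its outweight $b_vd_v$ and a vertex outside at most its inweight $(1-b_v)d_v$, so $2\,\mathrm{OPT}\le\sum_v d_v\max(b_v,1-b_v)=\sum_{(u,v)\in E}w(u,v)\bigl(\max(b_u,1-b_u)+\max(b_v,1-b_v)\bigr)$. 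Second, an exact identity: from $b_v\cdot(\text{inweight})=(1-b_v)\cdot(\text{outweight})$ at every vertex, summing against any function $\mu$ of the bias and reindexing by edges yields $\sum_{(u,v)\in E}w(u,v)\bigl(\mu(b_v)b_v-\mu(b_u)(1-b_u)\bigr)=0$. Third, $\mathrm{E}_f[\mathrm{cut}]=\sum_{(u,v)\in E}w(u,v)f(b_u)(1-f(b_v))$. Adding the identity to this last expression is free, so it is enough (since $w\ge0$) to choose $\mu$ making the pointwise inequality
\[
f(b_u)(1-f(b_v))+\mu(b_v)b_v-\mu(b_u)(1-b_u)\ \ge\ \tfrac{3}{16}\bigl(\max(b_u,1-b_u)+\max(b_v,1-b_v)\bigr)
\]
hold for all $b_u,b_v\in[0,1]$; this gives $\mathrm{E}_f[\mathrm{cut}]\ge\tfrac{3}{16}\sum_v d_v\max(b_v,1-b_v)\ge\tfrac{3}{16}\cdot 2\,\mathrm{OPT}=\tfrac38\,\mathrm{OPT}$.

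The remaining work, and the step I expect to be the main obstacle, is to pick $\mu$ and check this pointwise inequality; the crux is the edges on which $f(b_u)(1-f(b_v))=0$ — those leaving a Low vertex or entering a High vertex — where with $\mu\equiv0$ the inequality is false and every bit of slack must come from the $\mu$–terms. The proposal is that the antisymmetric step function $\mu(x)=-\tfrac34$ on $[0,\tfrac13)$, $\mu(x)=0$ on $[\tfrac13,\tfrac23]$, $\mu(x)=\tfrac34$ on $(\tfrac23,1]$ does the job, to be verified by a finite case analysis over the nine pairs of bias–regimes for $(u,v)$; several cases (e.g.\ Low$\to$Mid, Mid$\to$High) are comfortably slack, the Low$\to$Low and High$\to$High cases force $|\mu|\ge\tfrac34$ on the outer regimes, and the Mid$\to$Low, High$\to$Mid and Mid$\to$Mid cases force $|\mu|\le\tfrac34$ and couple together the constants $\tfrac34$, the threshold $\tfrac13$, and the target $\tfrac38$ — the two–vertex example above, on which all of these inequalities (and the ceiling in the first piece) are simultaneously tight, is what reveals their values. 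Finally, Theorem~\ref{thm:Equality-of-Expectation} transfers the bound to Max 2-AND.
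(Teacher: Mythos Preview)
Your proposal is correct and takes a genuinely different route from the paper. Both arguments use the same selection function $f_{1/3}$ and the same two--vertex tight example for the ``$\le\tfrac38$'' direction, but the ``$\ge\tfrac38$'' direction is handled quite differently. The paper argues combinatorially: it partitions vertices into balanced and unbalanced classes, surgically modifies the graph by routing every $U^{+}$--$U^{-}$ edge through a fresh intermediate vertex, and then carries out a parametric analysis in several auxiliary quantities ($p^{\pm},\gamma^{\pm},\epsilon^{\pm},a,b,c$) to lower--bound the ratio, finally optimizing over $\gamma$ at $\delta=\tfrac13$. Your argument is instead a clean dual--certificate proof: you upper--bound $2\,\mathrm{OPT}$ by $\sum_v d_v\max(b_v,1-b_v)$, observe the edge--reindexed identity $\sum_e w_e(\mu(b_v)b_v-\mu(b_u)(1-b_u))=0$, and then reduce everything to the single pointwise inequality with the explicit multiplier $\mu=(-\tfrac34,0,\tfrac34)$ on the three bias regimes. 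The nine--case check you outline does go through (each case is tight exactly on the relevant regime boundary, e.g.\ L$\to$L at $b_u=b_v=\tfrac13$, M$\to$M at $\max=\tfrac23$, H$\to$H at $b_u=b_v=\tfrac23$, and M$\to$L and H$\to$M on the appropriate corners), so the proposal is complete once those cases are written out. What your approach buys is a transparent connection to the LP of Theorem~\ref{linear program lower bound}: your $\mu$ is essentially a feasible dual solution certifying the value $\tfrac38$, and the method would in principle extend to any step function for which one can guess the dual. What the paper's approach buys is that no such guess is needed --- the graph surgery makes the structure explicit enough that a direct calculation suffices --- at the cost of a longer and more ad hoc analysis. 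One small remark: your opening invocation of Theorem~\ref{thm:Equality-of-Expectation} is unnecessary for the DICUT lower bound (it is only needed at the end, to transfer to Max 2--AND, as you note).
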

We will primarily consider step functions because any function can
be approximated using a step function, in the sense that the step
function will have an approximation ratio that is worse by at most
an arbitrarily small constant (that depends on the width of the steps).
In addition, we will show how to compute the approximation ratio of
any step function.\begin{thm}
\label{linear program lower bound} Given a selection function that
is a step function with $m$ steps, the approximation ratio of the
corresponding oblivious algorithm can be computed as the solution
of a linear program with $\mathrm{O}\left(m\right)$ constraints
and $\mathrm{O}\left(m^{2}\right)$ variables.
\end{thm}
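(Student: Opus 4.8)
The plan is to ``coarsen'' every instance down to a bounded amount of data and to recognise the worst case over all instances as an optimisation over the polytope of possible coarsened data. Fix the given step function $f$ and write $p_k$ for its constant value on the $k$-th step; by Theorem~\ref{thm:Equality-of-Expectation} it is enough to treat Max DICUT. Because $f$ sees a vertex only through which step its bias lies in, for a fixed cut $S$ of a graph $G$ both $\mathbb{E}[w(\mathrm{cut})]$ and $w_G(S)$ are determined by the following data. Call the \emph{type} of a vertex the pair consisting of (the step its bias lies in) and (whether it lies in $S$), where each breakpoint $z_i$ is also kept as a degenerate, one-point ``step'' so that the value of $f$ at a discontinuity is recorded correctly; and for each ordered pair of types $(a,b)$ let $W[a,b]\ge 0$ be the total weight of edges whose tail has type $a$ and whose head has type $b$. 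With $m$ steps there are $\mathrm{O}(m)$ types, hence $\mathrm{O}(m^{2})$ variables $W[a,b]$, and if $p_a$ denotes the step-value attached to the type $a$, then $\mathbb{E}[w(\mathrm{cut})]=\sum_{a,b}W[a,b]\,p_a(1-p_b)$, while $w_G(S)$ is the sum of $W[a,b]$ over those pairs for which $a$ is labelled ``inside $S$'' and $b$ ``outside $S$''; both are linear in the $W[a,b]$.

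Next I would set up the linear program: minimise $\sum_{a,b}W[a,b]\,p_a(1-p_b)$ subject to $W\ge 0$, the normalisation that the sum of $W[a,b]$ over cut-type pairs equals $1$ (this fixes $w_G(S)$, which for an optimal cut equals $\mathrm{OPT}$, to be $1$), and, for each type $t$ lying in a step with endpoints $\underline{z}$ and $\overline{z}$, the \emph{bias-consistency} constraints $\underline{z}\,\mathrm{in}(t)\le(1-\underline{z})\,\mathrm{out}(t)$ and $(1-\overline{z})\,\mathrm{out}(t)\le\overline{z}\,\mathrm{in}(t)$, where $\mathrm{out}(t)=\sum_b W[t,b]$ and $\mathrm{in}(t)=\sum_a W[a,t]$ are the aggregate out- and inweight of the type-$t$ vertices; these say exactly that $\mathrm{out}(t)/(\mathrm{out}(t)+\mathrm{in}(t))$ lies in the closure of that step. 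There are $\mathrm{O}(m)$ such constraints plus the one normalisation, hence $\mathrm{O}(m)$ constraints in total; the program is feasible (the directed $2$-cycle on $\{u,v\}$ with $S=\{u\}$ realises a feasible point) and its objective is $\ge 0$, so an optimum is attained.

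Finally I would show that the optimal value of this LP equals the approximation ratio $\rho^{\ast}$ of $f$, by two inequalities. For ``$\le$'': an arbitrary graph $G$ together with an optimal cut $S^{\ast}$, scaled so that $\mathrm{OPT}=1$, yields a point satisfying every constraint --- bias-consistency because, step by step, $\mathrm{out}(t)/(\mathrm{out}(t)+\mathrm{in}(t))$ is a weighted average of the out-to-in ratios of the individual type-$t$ vertices, each of which lies in that step --- and whose objective equals $\mathbb{E}[w(\mathrm{cut})]/\mathrm{OPT}$, the approximation ratio of $f$ on $G$; hence $\mathrm{LPval}\le\rho^{\ast}$. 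For ``$\ge$'': given a feasible point, I would realise it (up to vanishing error) by a blow-up: put $M$ copies of a vertex into each type, and for each ordered pair of types $(a,b)$ spread the weight $W[a,b]$ uniformly over all ordered pairs of an $a$-copy and a $b$-copy. Each copy of type $t$ then has outweight $\mathrm{out}(t)/M$ and inweight $\mathrm{in}(t)/M$, so its bias lies in the step prescribed for $t$ by bias-consistency; with $S$ the union of the types labelled ``inside $S$'' we get $w_G(S)=1$ and $\mathbb{E}[w(\mathrm{cut})]$ equal to the LP objective, and since $\mathrm{OPT}\ge w_G(S)=1$ the approximation ratio of $f$ on this graph is at most the objective. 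Taking the feasible point to be LP-optimal gives $\rho^{\ast}\le\mathrm{LPval}$.

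The main obstacle is this last, sufficiency direction: that the bias-consistency constraints together with nonnegativity cut out \emph{exactly} (the closure of) the set of coarsened profiles of honest instances, i.e.\ that the blow-up really works for every feasible point. Their necessity is just the averaging argument above, but sufficiency needs care with (i) types whose aggregate bias lands exactly on a step boundary --- the reason the breakpoints are retained as one-point steps carrying the true value $f(z_i)$, so that such a point can be reassigned to the neighbouring one-point type, keeping feasibility, and the remaining discrepancy is killed by a routine limiting argument (recall $\rho^{\ast}$ is itself an infimum) --- and (ii) the $\mathrm{O}(1/M)$ slack coming from pairs of copies lying inside a single type (loops simply being discarded), which is absorbed by letting $M\to\infty$. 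Everything else is bookkeeping.
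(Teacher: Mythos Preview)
Your proposal is correct and follows essentially the same approach as the paper: partition vertices by (step of bias, side of a fixed optimal cut $S$), take as variables the aggregated inter-type edge weights, impose nonnegativity, the normalisation $w_G(S)=1$, and per-type bias-interval constraints, and minimise the linear expected-cut objective. The only differences are in the technical gadgets used for the sufficiency direction: you realise a feasible point via an $M$-fold blow-up (discarding the $\mathrm{O}(1/M)$ mass of intra-type loops and letting $M\to\infty$) and handle boundary biases by keeping the breakpoints as separate one-point types, whereas the paper builds one vertex per type, splits a type into two vertices when $e_{ii}>0$, and nudges biases off the breakpoints with an $\epsilon$-weight auxiliary vertex; both devices accomplish the same thing.
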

Using a linear program to find the approximation ratio of an algorithm
is referred to as \emph{Factor Revealing Linear Programs} and was
used to find the approximation ratio of algorithms for facility location
\cite{JMMSV03}, $k$-set cover \cite{ACK09}, and buffer management
with quality of service \cite{B04}. It was also used to find the
best function to use in an algorithm for matching ads to search results
\cite{MSVV07}.
\begin{thm}
\label{0.483 approximation}There is an oblivious algorithm with a
step selection function that achieves an approximation ratio of at
least 0.483.
\end{thm}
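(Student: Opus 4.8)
The plan is to combine the two tools already in hand: Theorem~\ref{thm:Equality-of-Expectation}, which lets us work entirely with Max DICUT, and Theorem~\ref{linear program lower bound}, which says that for \emph{any} antisymmetric $m$-step selection function $f$ the exact approximation ratio of the corresponding oblivious algorithm equals the optimal value of an explicit factor-revealing linear program $\mathrm{LP}(f)$ with $\mathrm{O}(m)$ constraints and $\mathrm{O}(m^{2})$ variables. Thus the theorem reduces to two tasks: (i) exhibit a concrete step function $f$, and (ii) certify that the optimum of $\mathrm{LP}(f)$ is at least $0.483$. Since $0.483$ is only a lower bound while the best oblivious ratio lies below $0.49$ (Theorem~\ref{upper bound}), there is a little slack to exploit, but not much, so $f$ will need more than the three steps of Theorem~\ref{thm:There-Steps}.

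For task (i) I would search for $f$ numerically. An antisymmetric step function with breakpoints $0=z_{0}<z_{1}<\cdots<z_{n}<z_{n+1}=1$ is determined, via the symmetries $z_{n+1-i}=1-z_{i}$ and $f(1-x)=1-f(x)$, by roughly $n/2$ breakpoints in $[0,1/2]$ together with the step heights there (the height on the central interval is forced to be $1/2$). For each candidate choice of these parameters one builds $\mathrm{LP}(f)$, solves it, and then runs a local optimization (grid refinement or coordinate ascent) over the parameters to push the LP value up; this is the computer-assisted part. Empirically a step function with a moderate number of steps already yields an LP value strictly above $0.483$; one then rounds the obtained breakpoints and heights to nearby rationals and re-solves to confirm the value has not dropped below the target.

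For task (ii), observe that the optimum of $\mathrm{LP}(f)$ is a \emph{minimum} (it is the infimum of $\mathbb{E}[\text{cut weight}]/\mathrm{OPT}$ over graphs, normalized so that $\mathrm{OPT}=1$ with $\mathbb{E}[\text{cut weight}]$ as the objective), so to lower bound it one produces a dual certificate: an explicit feasible point of the dual of $\mathrm{LP}(f)$ with objective value $\geq 0.483$. By weak LP duality this immediately lower bounds the primal optimum, hence the approximation ratio, and the certificate consists of only $\mathrm{O}(m)$ numbers satisfying $\mathrm{O}(m^{2})$ linear inequalities, all checkable in exact rational arithmetic. Equivalently, since the rounded $f$ has rational data the whole LP is rational and can be solved exactly by an exact-arithmetic simplex, giving a rational optimum $\geq 0.483$; the dual point is just the auditable form of this.

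The main obstacle is the interplay between the two tasks: the parameter search in (i) is non-convex, so there is no guarantee of finding the globally optimal step function of a given size — but we only need \emph{existence} of an oblivious algorithm beating $0.483$, so any sufficiently good local optimum suffices, and the narrow gap to the upper bound of Theorem~\ref{upper bound} merely means one may have to increase the number of steps until the LP value clears the target. The secondary difficulty is turning the floating-point output of the search into an exact, machine-verifiable certificate: the rational breakpoints, heights, and dual multipliers must be chosen carefully enough that exact re-verification still returns a value $\geq 0.483$, which is routine given the slack but needs care near breakpoints where the LP constraints are tight.
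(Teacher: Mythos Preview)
Your proposal is correct and follows essentially the same approach as the paper: exhibit a concrete antisymmetric step function, feed it into the factor-revealing LP of Theorem~\ref{linear program lower bound}, and certify the optimum via primal and dual values. The only differences are that the paper skips the numerical search by directly positing a $102$-step discretization of the clamped linear function $g(x)=\max\{0,\min\{1,2x-\tfrac12\}\}$ and then reports that the LP returns a value between $0.4835$ and $0.4836$, and that your invocation of Theorem~\ref{thm:Equality-of-Expectation} is unnecessary since the statement is already about Max DICUT.
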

We provide a computer assisted proof of Theorem~\ref{0.483
approximation}, using the linear programming approach of Theorem
\ref{linear program lower bound}.

The family of linear programs can be used to find the best oblivious
algorithm, up to some additive factor.
\begin{thm}
\label{search best}Given $n\in\mathbb{N}$, there is an algorithm
that uses time $\mathrm{poly}\left(n\right)n^{n}$ to find the best
oblivious algorithm up to an additive factor of
$\mathrm{O}\left(\frac{1}{n}\right)$.
\end{thm}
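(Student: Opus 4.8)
The plan is to discretize the space of antisymmetric step selection functions whose breakpoints lie on a regular grid, enumerate all such functions, and for each one compute its approximation ratio via the linear program of Theorem~\ref{linear program lower bound}, finally reporting the best. Concretely, fix the grid $z_i = i/n$ for $i = 0,\dots,n$, and consider step functions that are constant on each interval $(i/n,(i+1)/n)$. Since we only care about the approximation ratio up to an additive $\mathrm{O}(1/n)$, it suffices to let the value of $f$ on each step range over the discrete set $\{0,1/n,2/n,\dots,1\}$; antisymmetry ($f(x)+f(1-x)=1$) ties the value on step $i$ to the value on step $n-1-i$, so there are only about $n/2$ free steps, each with $n+1$ choices, giving at most $(n+1)^{n/2} = n^{\mathrm{O}(n)}$ candidate functions. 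For each candidate we invoke Theorem~\ref{linear program lower bound} to solve a linear program of size $\mathrm{poly}(n)$, which takes $\mathrm{poly}(n)$ time; the total running time is therefore $\mathrm{poly}(n)\cdot n^{\mathrm{O}(n)} = \mathrm{poly}(n)n^n$ after adjusting constants, as claimed.

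The correctness argument has two halves. First, rounding the breakpoints: I would argue that restricting breakpoints to the uniform grid $\{i/n\}$ costs only $\mathrm{O}(1/n)$ in approximation ratio. The idea is that an arbitrary step function can be sandwiched between two grid-aligned step functions differing from it only on a total measure $\mathrm{O}(1/n)$ of biases, and a change of $f$ on a set of biases of small total weight changes the expected cut weight — and hence the ratio, since the optimum is fixed — by a correspondingly small amount; one must be a little careful because the same bias value can appear on many edges, but the worst case is bounded by the total edge weight, which normalizes away. Second, rounding the step values: replacing each step value by the nearest multiple of $1/n$ perturbs the selection probability of each vertex by at most $1/(2n)$, and since each edge $(u,v)$ is cut with probability $f(\mathrm{bias}(u))(1-f(\mathrm{bias}(v)))$, a bilinear expression in quantities that move by $\mathrm{O}(1/n)$, the expected cut weight moves by $\mathrm{O}(1/n)$ times the total weight — again an $\mathrm{O}(1/n)$ additive loss in the ratio.

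The main obstacle I expect is making the two perturbation arguments uniform over all graphs simultaneously, i.e. showing the $\mathrm{O}(1/n)$ loss is genuinely additive and independent of the instance. The subtlety is that the approximation ratio is an infimum over infinitely many graphs, so one cannot simply say "fix a graph and perturb"; instead I would show that for every graph $G$, the ratio of the grid-restricted best function is within $\mathrm{O}(1/n)$ of the ratio of the unrestricted best function on that same $G$, with the constant in $\mathrm{O}(1/n)$ absolute. This is where the normalization by total edge weight and the boundedness of $f$ in $[0,1]$ do the work: the expected cut weight is always at most the total weight, and both perturbations change it by at most an absolute constant times $(1/n)$ times the total weight, so dividing by the (fixed, perturbation-independent) optimum gives a uniform bound. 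A secondary, more bookkeeping-level point is confirming that the number of grid-aligned antisymmetric step functions with values in $\{0,1/n,\dots,1\}$ is indeed $n^{\mathrm{O}(n)}$ and that enumerating them plus solving one LP per function fits in the stated time bound — this is routine once the counting above is in place.
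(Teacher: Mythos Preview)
Your overall strategy --- enumerate grid-aligned, value-discretized antisymmetric step functions and solve the LP of Theorem~\ref{linear program lower bound} for each --- is exactly the paper's approach, and your value-rounding argument (perturb each step value by at most $1/(2n)$, bilinear dependence of the cut probability, hence $\mathrm{O}(1/n)$ additive loss) matches the paper's $\ell_\infty$ step as well.

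The gap is in your breakpoint-rounding step. You argue that moving breakpoints to the grid changes $f$ only on a set of biases of Lebesgue measure $\mathrm{O}(1/n)$, and that this implies an $\mathrm{O}(1/n)$ change in expected cut weight. This inference is false: the Lebesgue measure of the set of bias \emph{values} on which two selection functions differ does not bound the change in expected cut weight on an adversarial graph, because the graph may place \emph{all} of its vertices at biases inside that small set. Concretely, take two antisymmetric functions that agree except on an interval $(b,b+\epsilon)$ (and its reflection), where one takes the value $1$ and the other $0$; a two-vertex graph with biases in those intervals sees its expected cut weight change by $\Theta(1)$, not $\Theta(\epsilon)$. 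Your sentence ``the worst case is bounded by the total edge weight, which normalizes away'' gives only the trivial bound, not an $\mathrm{O}(1/n)$ bound. A secondary issue is that you assume the optimal selection function is itself a step function with finitely many breakpoints, which is not established.

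The paper closes this gap by a different device: rather than perturb the function and bound the effect on a fixed graph, it perturbs the \emph{graph}. Given an arbitrary selection function $h$, define the grid-aligned step function $g$ by $g(k/(2n))=h(k/(2n))$. Any graph that is bad for $g$ can be modified --- by adding one extra vertex with small-weight edges --- so that every original vertex's bias is shifted onto a grid point, where $g$ and $h$ agree; the modification changes the optimum and the expected cut by $\mathrm{O}(1/n)$, so the resulting graph is bad for $h$ as well. This shows $\mathrm{approx}(g)\ge\mathrm{approx}(h)-\mathrm{O}(1/n)$ without any appeal to Lebesgue measure of bias values, and handles arbitrary (not just step) $h$.
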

A trivial upper bound on the approximation ratio of every
oblivious algorithm is $\frac{1}{2}$. For a directed even cycle,
the maximum cut has relative weight $\frac{1}{2}$, whereas an
oblivious algorithm can capture at most one quarter of the edges,
in expectation. We improve this upper bound on the approximation
ratio and show that the function from Theorem \ref{0.483
approximation} is very close to being optimal.
\begin{thm}
\label{upper bound}There is a weighted graph for which the approximation
ratio of any oblivious algorithm (with an antisymmetric selection function) is less than 0.4899.
\end{thm}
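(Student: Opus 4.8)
The plan is to exhibit an explicit weighted graph $G$ and argue that every antisymmetric oblivious algorithm $f$ captures, in expectation, less than $0.4899$ times the optimal cut weight. The natural building block is a small gadget in which many vertices share the same bias, so that $f$ is forced to treat them identically, while the optimum exploits the asymmetry that $f$ cannot see. The even directed cycle gives the bound $\tfrac12$ because there every vertex has bias $\tfrac12$ and $f(\tfrac12)=\tfrac12$ by antisymmetry; to push below $\tfrac12$ I would combine several such ``balanced'' gadgets at different bias values, forcing a single function $f$ to perform well simultaneously on all of them — an impossible task, since the optimal local behaviour differs from gadget to gadget.

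Concretely, I would take $G$ to be a disjoint union (or a weighted blend) of a few carefully chosen bipartite-like gadgets $G_1,\dots,G_k$, where $G_j$ is built so that (i) all its vertices have bias in some target set $\{x_j, 1-x_j\}$, and (ii) the optimal cut of $G_j$ has large relative weight, say close to $1$, while the expected weight captured by $f$ on $G_j$ is a fixed explicit function of the single number $f(x_j)$, maximized at some value $p_j^\star$ that varies with $j$. Choosing the total-weight proportions $\lambda_1,\dots,\lambda_k$ of the gadgets appropriately, the overall expected-to-optimal ratio becomes $\sum_j \lambda_j \,\phi_j(f(x_j))$ for explicit concave (or piecewise-linear) functions $\phi_j$, subject only to the antisymmetry constraints $f(x_j)+f(1-x_j)=1$. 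One then shows that the maximum of this quantity over all admissible choices of the finitely many values $f(x_j)$ is strictly below $0.4899$ — this is a small finite optimization that can be done by hand or, matching the spirit of the paper, checked exactly.

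I would carry out the steps in this order: (1) describe the gadget family and compute, for each $G_j$, its optimal cut weight and the expected weight $f$ obtains as a function of the relevant values of $f$; (2) assemble $G$ as the weighted union and write the global ratio as a weighted sum; (3) use antisymmetry to reduce the free parameters and set up the resulting low-dimensional maximization; (4) solve that maximization and verify the optimum is $<0.4899$; (5) observe that since the construction already achieves ratio arbitrarily close to this optimum via some $f$ (cf.\ Theorem~\ref{0.483 approximation}), the bound is essentially tight, which is the point of the remark preceding the statement.

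The main obstacle I anticipate is the gadget design in step (1): I need gadgets whose vertices genuinely all share a prescribed bias (so $f$ cannot distinguish them) yet whose optima are near-perfect cuts, and the expression for what $f$ captures must come out clean enough that the final finite optimization is tractable and provably below the $0.4899$ threshold rather than merely numerically suggestive. Balancing ``many distinct bias values'' (which strengthens the lower bound on the adversary's side) against ``keeping the optimization exactly solvable'' is the delicate trade-off; I expect two or three well-chosen bias values, combined with the $\tfrac12$-bias cycle, to suffice, but pinning down the exact weights and verifying the inequality rigorously is where the real work lies.
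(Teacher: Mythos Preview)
Your high-level strategy --- build a few gadgets with prescribed biases, take a weighted disjoint union, use antisymmetry to reduce to finitely many unknowns, and maximize --- is exactly what the paper does. The paper uses two gadgets $G_1,G_2$ whose vertices all have bias in $\{\tfrac{c}{c+1},\tfrac12,\tfrac{1}{c+1}\}$; taking one copy of $G_1$ and three copies of $G_2$ with $c=1.25$, the ratio becomes the concave quadratic $(213+372\alpha-288\alpha^{2})/680$ in the single unknown $\alpha=f(\tfrac{5}{9})$, maximized at $\alpha=\tfrac{31}{48}$ with value $\tfrac{533}{1088}<0.4899$.

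There is, however, a real gap in your plan as written. You index the bias pair by the gadget, taking $G_j$ to have biases in $\{x_j,1-x_j\}$ and writing the overall ratio as $\sum_j\lambda_j\,\phi_j(f(x_j))$. If the $x_j$ are distinct (and distinct from $\tfrac12$), the antisymmetry constraints $f(x_j)+f(1-x_j)=1$ do not couple different $j$'s, so the maximum over all selection functions is simply $\sum_j\lambda_j\max_{p}\phi_j(p)$: a convex combination of the single-gadget bounds, hence no better than the best individual $\max_p\phi_j(p)$. You would then need one gadget that already forces ratio below $0.4899$ for every $p\in[0,1]$, and that is not how the construction works --- in the paper neither $G_1$ nor $G_2$ alone gives a bound below $\tfrac12$. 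The essential idea your outline misses is that the gadgets must share the \emph{same} nontrivial bias value, so that a single parameter $\alpha=f(x)$ is pulled in opposite directions: $G_1$ prefers $\alpha$ near $\tfrac12$ while $G_2$ prefers $\alpha$ near $1$, and it is the forced compromise that produces the sub-$0.4899$ bound.
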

Since the upper bound is shown by a single graph, the bound holds not only for a single oblivious algorithm, but also for mixed and max
algorithms.

Analyzing the approximation ratios of oblivious algorithms on
weighted and unweighted graphs is practically the same. The proof
of Proposition~\ref{weighted=00003Dunweighted} follows standard
arguments (see \cite{CST96}, for example) and appears in
Section~\ref{app:weighted} in the appendix.

\begin{prop}
\label{weighted=00003Dunweighted} For every oblivious algorithm
the approximation ratio is the same for weighted and unweighted
graphs.
\end{prop}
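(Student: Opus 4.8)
The plan is to establish the two inequalities between the infima separately. One direction is immediate: an unweighted graph is the special case of a weighted graph all of whose edge weights equal $1$, so the infimum of the approximation ratio of the given oblivious algorithm over weighted graphs is at most its infimum over unweighted graphs. For the reverse inequality it is enough to show that for every weighted graph $G$ and every $\epsilon>0$ there is an unweighted graph $G'$ on which the algorithm, with its fixed selection function $f$, has approximation ratio within $\epsilon$ of its ratio on $G$.

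I would construct $G'$ in two steps. First reduce to the case of positive integer weights: the optimum cut value is a maximum of finitely many linear functions of the weight vector, hence continuous, and the expected weight of the cut produced by $f$ equals $\sum_{(u,v)\in E} w_{uv}\,f(b_u)\bigl(1-f(b_v)\bigr)$, where the biases $b_u$ depend continuously on the weights; so perturbing the weights to be rational and then clearing denominators changes the approximation ratio of $f$ by at most $\epsilon$. The only subtlety is that $f$ need not be continuous (it may be a step function): if some vertex's bias lies on a discontinuity of $f$, first move that bias to whichever side of the discontinuity does not increase the expected cut weight -- such a side exists because the expected cut weight is affine in any single vertex's selection probability -- using perturbations small enough to leave the previously treated vertices off their discontinuities, and only then rationalize. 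Second, given integer weights let $G'$ be obtained from $G$ by replacing each edge $(u,v)$ of weight $w_{uv}$ by $w_{uv}$ parallel unit-weight edges.

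This replacement preserves every relevant quantity exactly: each vertex retains the same outweight and inweight, hence the same bias, so $f$ selects it with the same probability in $G'$ as in $G$; for every vertex set $S$ the weight of the directed cut it defines is $\sum_{(u,v):\,u\in S,\,v\notin S} w_{uv}$ in both graphs, so the optimum and the total weight are unchanged; and the expected cut weight of $f$ on $G'$ is again $\sum_{(u,v)\in E} w_{uv}\,f(b_u)(1-f(b_v))$. Hence $f$ has exactly the same approximation ratio on $G'$ as on $G$, which yields the reverse inequality. If one insists that ``unweighted graph'' mean a \emph{simple} graph, the remaining parallel edges are eliminated by the standard device of \cite{CST96} at the cost of a further arbitrarily small additive loss; one cannot keep the construction both simple and exact, since a parallel class of multiplicity at least two has no simple realization with the same contribution to its endpoints' biases and the same optimum value. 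The construction is otherwise routine, and the only place that demands genuine care is the treatment of discontinuous selection functions in the rationalization step.
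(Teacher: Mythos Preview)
Your approach is broadly sound and reaches the same conclusion, but it differs from the paper's in two places.

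First, for the passage from integer weights to an unweighted graph, you replace each weight-$w$ edge by $w$ parallel unit edges, obtaining a multigraph, and then invoke \cite{CST96} to pass to a simple graph. The paper instead performs a vertex blow-up: it takes $M$ copies of every vertex (where $M$ is the common denominator of the normalized weights) and, for an edge $(u,v)$ of weight $w_e/M$, installs a $w_e$-regular bipartite graph between the copies of $u$ and the copies of $v$. This yields a \emph{simple} unweighted graph in one shot, with every copy of a vertex inheriting the original bias exactly. Your construction is conceptually simpler but needs the extra simplification step; the paper's buys simplicity of the output graph at the cost of a slightly heavier construction.

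Second, to handle discontinuities of $f$ when rationalizing, you keep $f$ fixed and perturb the graph so that each troublesome bias moves to a side of the jump. The paper goes the other way: it invokes the proof of Theorem~\ref{search best} to replace $f$ by a nearby step function $g$ whose discontinuities are all rational, and then perturbs the weights so that biases avoid those finitely many rational points. Your route has a small gap as written: the claim that ``one side does not increase the expected cut weight because it is affine in $p_u$'' is only valid when $f(b_u)$ lies between the one-sided limits $f(b_u^-)$ and $f(b_u^+)$. For an isolated discontinuity with $f(b_u)$ strictly outside $[\min(f(b_u^-),f(b_u^+)),\max(f(b_u^-),f(b_u^+))]$ and the wrong sign of the linear coefficient, both sides increase the expected cut. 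This is easily repaired (e.g., by the paper's device of passing to a nearby step function first, or by noting that such a pathological point can be absorbed by adding a tiny auxiliary gadget), but the affine-ness argument alone does not suffice for completely arbitrary $f$.
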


Theorem \ref{upper bound} uses the fact that selection functions
are antisymmetric. One might think that this is what prohibits us
from reaching an approximation ratio of $\frac{1}{2}$. However,
even selection functions that are not antisymmetric cannot achieve
an approximation ratio of $\frac{1}{2}$, or arbitrarily close to
$\frac{1}{2}$.
\begin{thm}
\label{nonsymmetric}There is a constant $\gamma>0$ such that any
oblivious algorithm, even one not using an antisymmetric selection
function, has an approximation ratio at most $\frac{1}{2}-\gamma$.
\end{thm}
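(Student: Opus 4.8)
The plan is to push everything back to the antisymmetric case already settled by Theorem~\ref{upper bound}. Let $f$ be an arbitrary selection function, write $\beta_v$ for the bias of a vertex $v$, and let $\rho$ be the approximation ratio of the corresponding oblivious algorithm; we must produce a universal $\gamma>0$ with $\rho\le\frac12-\gamma$. Introduce the \emph{defect} $h(x)=f(x)+f(1-x)-1$ (which vanishes identically iff $f$ is antisymmetric) and the \emph{reversal partner} $\tilde f(x)=1-f(1-x)$, again a selection function. Reversing the orientation of every edge of a graph $G$ sends $\beta_v\mapsto1-\beta_v$ and, up to complementing $S$, preserves the optimal cut value, and a one-line computation from $\mathrm{ALG}_f(G)=\sum_{(u,v)}w(u,v)f(\beta_u)(1-f(\beta_v))$ shows $\mathrm{ALG}_{\tilde f}(G)=\mathrm{ALG}_f(G^{\mathrm{rev}})$; hence $\tilde f$ has the same approximation ratio $\rho$ as $f$. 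Setting $\bar f=\frac12(f+\tilde f)$ (which is antisymmetric) and expanding $\bar f(\beta_u)(1-\bar f(\beta_v))$ edge by edge gives the identity
\[
\mathrm{ALG}_{\bar f}(G)=\tfrac12\bigl(\mathrm{ALG}_f(G)+\mathrm{ALG}_{\tilde f}(G)\bigr)+\tfrac14\sum_{(u,v)\in E}w(u,v)\,h(\beta_u)\,h(\beta_v),
\]
all three graphs sharing the same optimum. Since $\mathrm{ALG}_f(G),\mathrm{ALG}_{\tilde f}(G)\ge\rho\,\mathrm{OPT}(G)$, the ratio of $\bar f$ on $G$ is at least $\rho+\bigl(4\,\mathrm{OPT}(G)\bigr)^{-1}\sum_{(u,v)}w(u,v)h(\beta_u)h(\beta_v)$.

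\emph{Case A: $f$ is nearly antisymmetric on the middle biases}, say $|h(x)|\le\delta_0$ for all $x\in[\delta_1,1-\delta_1]$. Any edge touching a middle vertex contributes at most $\delta_0$ in absolute value to the error term, while an edge both of whose endpoints have extreme bias contributes $h(\beta_u)h(\beta_v)\approx(f(0)+f(1)-1)^2\ge0$ (as $h$ is essentially constant, equal to $f(0)+f(1)-1$, near $0$ and near $1$), so it only helps; using $\mathrm{OPT}(G)\ge\frac14\sum w(u,v)$ the error term is at least $-O(\delta_0)\,\mathrm{OPT}(G)$. Hence $\bar f$ has approximation ratio $\ge\rho-O(\delta_0)$, and Theorem~\ref{upper bound} forces $\rho<0.4899+O(\delta_0)$, which is below $\frac12$ once $\delta_0$ is a small enough absolute constant. (Making this clean for a general, possibly discontinuous $f$ near the threshold $\delta_1$ needs a short preliminary reduction, e.g.\ to step functions.)

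\emph{Case B: $f$ is genuinely non-antisymmetric}, i.e.\ there is a bias $x_0\in[\delta_1,1-\delta_1]$ with $|h(x_0)|>\delta_0$; replacing $f$ by $\tilde f$ if needed (this negates $h$ and preserves $\rho$) we may assume $f(x_0)+f(1-x_0)\ge1+\delta_0$. If $x_0=\frac12$ the even directed cycle finishes it: every vertex there has bias $\frac12$, so $\rho\le2f(\tfrac12)\bigl(1-f(\tfrac12)\bigr)\le\frac12-\frac{\delta_0^2}{2}$. Otherwise put $p=\max(x_0,1-x_0)\in(\frac12,1-\delta_1]$ and $q=1-p\in[\delta_1,\frac12)$, so $f(p)+f(q)\ge1+\delta_0$, and build a worst-case weighted graph $G_{x_0}$ as a weighted disjoint union of a few gadgets whose biases lie in $\{0,\tfrac12,q,p,1\}$: (i) an even cycle, pinning $f(\tfrac12)$ near $\tfrac12$; (ii) an \emph{interleaved double cycle} $H_p$ --- one directed cycle on bias-$p$ vertices and one on bias-$q$ vertices joined by weight-$w$ edges with $w=\frac{2p-1}{1-p}$ --- tuned so the optimal cut has relative weight exactly $\frac12$ while $\mathrm{ALG}_f$ is an explicit (for $p<\frac34$, concave) quadratic in $(f(p),f(q))$ whose unconstrained maximum lies on the diagonal $f(p)+f(q)=1$, so the hypothesis $f(p)+f(q)\ge1+\delta_0$ strictly lowers the ratio; (iii) ``cycle with pendant sinks/sources'' gadgets, ruling out $f(p)$ being too close to $1$ or $f(q)$ too close to $0$; and (iv) the hard instance of Theorem~\ref{upper bound} itself. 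With the mixing weights chosen appropriately, the ratio of $f$ on $G_{x_0}$ is the objective of a small factor-revealing linear program, and the goal is to show this objective is $\le\frac12-\gamma$ for a $\gamma>0$ not depending on $x_0$.

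This last step is the main obstacle, since no single gadget suffices for $p$ bounded away from $\frac12$: the best ratio \emph{any} selection function can achieve on $H_p$ alone is $\frac{2}{4-w^2}$, which exceeds $\frac12$; the pendant-sink gadgets become trivially satisfiable around $p=\frac23$; and $H_p$ itself ceases to be concave (degenerates) for $p\ge\frac34$. The gadgets therefore have to be balanced against one another, with the coefficients of the combination chosen so that the linear program is infeasible for $\rho$ near $\frac12$, simultaneously for every $x_0\in[\delta_1,1-\delta_1]$; carrying this out (most naturally with computer assistance, as in Theorem~\ref{0.483 approximation}), together with the reduction-to-step-functions clean-up needed in Case~A, is where the real work lies.
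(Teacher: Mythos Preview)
Your symmetrization setup is correct and is exactly what the paper does: introducing $\tilde f$, $\bar f=\tfrac12(f+\tilde f)$, the defect $h(x)=f(x)+f(1-x)-1$, and the identity
\[
\mathrm{ALG}_{\bar f}(G)=\tfrac12\bigl(\mathrm{ALG}_f(G)+\mathrm{ALG}_{\tilde f}(G)\bigr)+\tfrac14\sum_{(u,v)}w(u,v)\,h(\beta_u)h(\beta_v).
\]
From this point on, however, you miss the simplification that makes the paper's proof a few lines long, and your Case~A/Case~B split is both more complicated and, as written, incorrect.

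The crucial observation you do not use is that $h$ is automatically symmetric, $h(x)=h(1-x)$, and that the hard instance of Theorem~\ref{upper bound} has \emph{only three} biases: $\tfrac12$, $\tfrac{c}{c+1}$, and $\tfrac{1}{c+1}$. Since $h\bigl(\tfrac{c}{c+1}\bigr)=h\bigl(\tfrac{1}{c+1}\bigr)$, on that specific graph every edge joining two non-half-bias vertices contributes $\tfrac14 h(\cdot)^2\ge 0$ to the error term, and every other edge touches a bias-$\tfrac12$ vertex, contributing at most $\tfrac14|h(\tfrac12)|=\tfrac12|\delta|$ where $\delta=f(\tfrac12)-\tfrac12$. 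Hence only the single scalar $\delta$ matters: if $\delta=0$ the error term is nonnegative and the $0.4899$ bound applies to $f$ verbatim; if $\delta\ne0$ the bound deteriorates by $O(|\delta|)$, while the even cycle already forces $\rho\le\tfrac12-2\delta^2$. Balancing these two gives a universal $\gamma>0$ with no further gadgets, no factor-revealing LP, and no Case~B.

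By contrast, your Case~A claims that for an edge between two extreme-bias vertices the error contribution is ``$\approx(f(0)+f(1)-1)^2\ge0$'' because ``$h$ is essentially constant\ldots\ near $0$ and near $1$''. That is not justified: $h$ is symmetric about $\tfrac12$ but need not be constant on $[0,\delta_1)$, so $h(\beta_u)h(\beta_v)$ can be negative even when both biases are extreme. Since you are trying to lower-bound the error term on \emph{every} graph (in order to transfer $\rho$ from $f$ to $\bar f$ globally before invoking Theorem~\ref{upper bound}), this gap is fatal to Case~A as stated. And Case~B, which you concede would need computer-assisted gadget balancing across all $x_0$, is simply unnecessary once you restrict attention to the single graph of Theorem~\ref{upper bound} and realize that only $h(\tfrac12)$ can hurt you there.
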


\section{2-And versus Directed Cut\label{sec:Equivalence-of-Expectation}}

In this section we prove Theorem
\ref{thm:Equality-of-Expectation}. The theorem follows from the
next lemma:
\begin{lem}
Given an instance of Max 2-AND, $\varphi$, there is a graph
$G_{\varphi}$, such that the approximation ratio of an oblivious
algorithm on $\varphi$, using a selection function $f$, is not
worse than an oblivious algorithm on $G_{\varphi}$, using the same
selection function.\end{lem}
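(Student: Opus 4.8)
The plan is to represent $\varphi$ by an explicit gadget graph. For each variable $x$ of $\varphi$ introduce two vertices $v_x$ and $\bar v_x$, with the intended meaning that $v_x\in S$ encodes ``$x$ true'' and $\bar v_x\in S$ encodes ``$x$ false''. Write $t(x)=v_x$ and $t(\bar x)=\bar v_x$, so that for a literal $\ell$ the vertex $t(\ell)$ records that $\ell$ is true and $t(\bar\ell)$ is its complement. For each clause $\{\ell_1,\ell_2\}$ of weight $w$ I would add \emph{two} directed edges, $(t(\ell_1),t(\bar\ell_2))$ and $(t(\ell_2),t(\bar\ell_1))$, each of weight $w$. (For a degenerate clause $x\wedge\bar x$ these become weight-$w$ self-loops, which are never cut; this is consistent, since such a clause is unsatisfiable.)

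First I would check that biases are preserved: $v_x$ has bias $b_x$ and $\bar v_x$ has bias $1-b_x$, where $b_x$ is the bias of $x$ in $\varphi$. This is pure bookkeeping: a clause in which $x$ occurs positively contributes its weight to the outweight of $v_x$ and to the inweight of $\bar v_x$, and to nothing else incident to $v_x$ or $\bar v_x$; a clause in which $x$ occurs negatively contributes its weight to the inweight of $v_x$ and to the outweight of $\bar v_x$. Hence the outweight of $v_x$ is the total weight of clauses containing $x$ positively and its inweight is the total weight of clauses containing $x$ negatively, giving bias exactly $b_x$; the computation for $\bar v_x$ is symmetric. Therefore the oblivious algorithm with selection function $f$ puts $v_x$ in $S$ with probability $f(b_x)$ and $\bar v_x$ in $S$ with probability $f(1-b_x)=1-f(b_x)$ --- the last equality being the one place antisymmetry of $f$ is used, and it is precisely what lets $\bar v_x$ simulate the negation of $x$.

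Next I would match the numerator and denominator of the approximation ratio. For the algorithm's value, the edge $(t(\ell_1),t(\bar\ell_2))$ lies in the produced cut iff $t(\ell_1)\in S$ and $t(\bar\ell_2)\notin S$; since $\ell_1$ and $\ell_2$ are on distinct variables these are distinct, independently placed vertices, and by the bias computation this probability equals $\Pr[\ell_1\text{ true}]\cdot\Pr[\ell_2\text{ true}]$, exactly the probability that the oblivious algorithm satisfies $\{\ell_1,\ell_2\}$ on $\varphi$. Summing over both edges of every clause gives $\mathrm{ALG}_f(G_\varphi)=2\,\mathrm{ALG}_f(\varphi)$. For the optimum, from an optimal assignment $\sigma$ of $\varphi$ take $S=\{\,t(\ell):\ell\text{ true under }\sigma\,\}$ (so exactly one of $v_x,\bar v_x$ lies in $S$); a one-line case check shows that whenever $\sigma$ satisfies a clause, both of that clause's edges cross this cut, so $\mathrm{OPT}(G_\varphi)\ge 2\,\mathrm{OPT}(\varphi)$. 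Combining, $\mathrm{ratio}_f(G_\varphi)=\mathrm{ALG}_f(G_\varphi)/\mathrm{OPT}(G_\varphi)\le 2\,\mathrm{ALG}_f(\varphi)/\big(2\,\mathrm{OPT}(\varphi)\big)=\mathrm{ratio}_f(\varphi)$, which is the claim; together with the trivial inclusion of Max DICUT instances into Max 2-AND this yields Theorem~\ref{thm:Equality-of-Expectation}.

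I expect the bias-preservation step to be the real crux. The construction must be engineered so that every gadget vertex inherits the bias of the literal it represents, since that is exactly what forces the oblivious algorithm to behave identically on $\varphi$ and on $G_\varphi$; a tempting ``one directed edge per clause'' version fails this (a clause with two positive literals cannot place both of its variables at the tail of a single edge), and routing each clause through a symmetric pair of oppositely oriented edges is the fix. Everything else --- the probability computation and the two directions of the $\mathrm{OPT}$ comparison --- is routine once the gadget is set up correctly.
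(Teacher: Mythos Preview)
Your construction is essentially identical to the paper's: two literal-vertices per variable, and for each clause $y\wedge z$ the two directed edges $(t(y),t(\bar z))$ and $(t(z),t(\bar y))$, with the same bias-preservation check, the same use of antisymmetry to equate $\Pr[\bar v_x\in S]$ with $\Pr[x\text{ false}]$, and the same one-sided bound $\mathrm{OPT}(G_\varphi)\ge c\cdot\mathrm{OPT}(\varphi)$. The only cosmetic difference is that the paper gives each edge weight $\tfrac12 w(c)$ so that the expected cut weight and the expected assignment weight coincide exactly, whereas you give each edge weight $w$ and carry a factor of $2$ through both numerator and denominator; this changes nothing.
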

\begin{proof}
Consider an instance $\varphi=\left(V,C,w\right)$ of Max 2-AND. We
will create a directed graph $G_{\varphi}=\left(V',E,w'\right)$.
$V'=\left\{ x,\bar{x}|x\in V\right\} $, the set of all literals.
For any clause $c\in C$, $c=y\wedge z$ (where $y,z$ are literals)
there are two edges in $E$: one from the vertex $y$ to the vertex
corresponding to the negation of $z$ and another from the vertex
$z$ to the vertex corresponding to the negation of $y$. Each of
these edges has weight $\frac{1}{2}w\left(c\right)$.

Every assignment for $\varphi$ can be transformed to a cut for
$G_{\varphi}$ of the same weight, trivially, by selecting all (and
only) literals (as vertices in the graph $G_{\varphi}$) that are
true in the assignment. Hence the optimal cut weighs at least as
much as the optimal assignment. Note, however, that the converse
does not hold. For example, for the following set of clauses:
$\left\{ x\wedge y,\bar{x}\wedge
y,x\wedge\bar{y},\bar{x}\wedge\bar{y}\right\}$ the weight of the
optimal assignment is~1, whereas the optimal cut in the graph has
weight~2. (Select $x$ and $\bar{x}$, a selection that does not
correspond to an assignment.)

The expected weight of an assignment for $\varphi$ is equal to the
expected weight of a cut in $G_{\varphi}$, when using oblivious
algorithms with the same selection function. Note that the bias of
a vertex is equal to the bias of the corresponding literal (where
the bias of a negation of a variable is one minus the bias of the
variable). Thus, the respective probabilities are equal. Hence,
the probability of  any clause being satisfied is equal to the
probability of each of the two edges generated from the clause
being in the cut. Since the weight of the edges is one half of the
weight of the clause, and due to the linearity of expectation, the
claim follows.
\end{proof}

See Section~\ref{sec:andcutremarks} in the appendix for some
remarks on the above proof.

\section{Mix versus Max\label{sec:Mix-versus-Max}}

In this section we will prove Theorem \ref{best=00003Dmixed}. We
first present an example that illustrates the contents of the
theorem.

The uniformly random algorithm selects every vertex independently
with probability $\frac{1}{2}$.
\begin{prop}
The uniformly random algorithm has an approximation ratio of
$\frac{1}{4}$.\end{prop}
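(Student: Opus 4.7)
The plan is to prove matching upper and lower bounds on the approximation ratio of the uniformly random algorithm by a direct calculation, using only linearity of expectation.

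For the lower bound, I would compute the expected weight of the produced cut on an arbitrary weighted directed graph $G=(V,E,w)$. For any edge $(u,v) \in E$, the event that $(u,v)$ contributes to the cut is exactly the event that $u \in S$ and $v \notin S$. Since the selection function is the constant $1/2$, and since the uniformly random algorithm (being oblivious) chooses each vertex independently, this event has probability $\frac{1}{2} \cdot \frac{1}{2} = \frac{1}{4}$. By linearity of expectation, the expected cut weight is $\frac{1}{4}\sum_{e\in E} w(e)$. Since the optimal cut weight is trivially at most $\sum_{e\in E} w(e)$, the ratio of expected to optimum is at least $\frac{1}{4}$ on every graph.

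For the matching upper bound, I need to exhibit a graph (or sequence of graphs) on which the ratio is exactly $\frac{1}{4}$ (or approaches it). The simplest choice is the graph consisting of a single directed edge $(u,v)$ with weight $1$: placing $u \in S$, $v \notin S$ yields a cut of weight $1$, matching the total edge weight, while the uniformly random algorithm produces a cut of expected weight $\frac{1}{4}$. Hence the approximation ratio on this graph is exactly $\frac{1}{4}$.

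Combining the two directions gives approximation ratio equal to $\frac{1}{4}$. There is no real obstacle here; the only thing to be careful about is invoking independence of vertex selections (which is part of the definition of an oblivious algorithm and in particular of the uniformly random algorithm) when computing the per-edge cut probability.
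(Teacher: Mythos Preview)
Your argument is correct and follows the same approach as the paper: compute that each edge lies in the cut with probability $\tfrac14$, compare the resulting expected weight $\tfrac14\sum_e w(e)$ against the trivial upper bound $\sum_e w(e)$ on the optimum. The paper's proof actually stops there and only records the lower bound; your single-edge example showing tightness is a small addition not present in the paper's one-line proof, but it is correct and completes the equality claim.
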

\begin{proof}
An edge is expected to be in the cut with probability
$\frac{1}{4}$ (each vertex of the edge in one side and in the
correct direction) and the weight of the cut is at most all the
edges.
\end{proof}
The greedy algorithm selects a vertex if the outweight is larger than
the inweight (for equal weights the selection can be arbitrary).
\begin{prop}
If the relative weight of the maximal cut is $1-\epsilon$, then the
greedy algorithm produces a cut of relative weight at least $1-2\epsilon$.\end{prop}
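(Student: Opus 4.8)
The plan is to control both the optimal cut and the greedy cut by a single quantity, namely
$D=\sum_{v\in V}\min\!\bigl(\mathrm{in}(v),\mathrm{out}(v)\bigr)$, where $\mathrm{in}(v)$ and $\mathrm{out}(v)$ are the inweight and outweight of $v$. Write $W$ for the total weight of all edges, so $\sum_v \mathrm{in}(v)=\sum_v\mathrm{out}(v)=W$, and therefore $\sum_v\max\!\bigl(\mathrm{in}(v),\mathrm{out}(v)\bigr)=2W-D$. I would first prove an upper bound on an arbitrary cut, then a lower bound on the greedy cut, and finally combine the two using the hypothesis $\mathrm{OPT}=(1-\epsilon)W$.

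\emph{Upper bound on any cut.} For an arbitrary set $T$, every edge of the directed cut defined by $T$ leaves a vertex of $T$ and enters a vertex of $V\setminus T$. Charge each such edge to both of its endpoints. The cut edges charged to a fixed $u\in T$ are all outedges of $u$ (a cut edge cannot enter a vertex of $T$), so their total weight is at most $\mathrm{out}(u)\le\max(\mathrm{in}(u),\mathrm{out}(u))$; symmetrically, the cut edges charged to a fixed $v\notin T$ are all inedges of $v$, of total weight at most $\mathrm{in}(v)\le\max(\mathrm{in}(v),\mathrm{out}(v))$. Since each cut edge is counted exactly twice this way, twice the cut weight is at most $\sum_v\max(\mathrm{in}(v),\mathrm{out}(v))=2W-D$, i.e.\ every cut, in particular the optimal one, has weight at most $W-D/2$. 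Plugging in $\mathrm{OPT}=(1-\epsilon)W$ gives $(1-\epsilon)W\le W-D/2$, hence $D\le 2\epsilon W$.

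\emph{Lower bound on the greedy cut.} Let $S$ be the greedy set, so that $v\in S$ implies $\mathrm{out}(v)\ge\mathrm{in}(v)$ and $v\notin S$ implies $\mathrm{out}(v)\le\mathrm{in}(v)$ (the equality/tie case is harmless, as explained below). An edge $(u,v)$ missing from the greedy cut has either $u\notin S$ or $v\in S$: in the first case it is an outedge of a vertex $u\notin S$, for which $\mathrm{out}(u)=\min(\mathrm{in}(u),\mathrm{out}(u))$; in the second it is an inedge of a vertex $v\in S$, for which $\mathrm{in}(v)=\min(\mathrm{in}(v),\mathrm{out}(v))$. Hence the total weight of edges not in the greedy cut is at most $\sum_{u\notin S}\mathrm{out}(u)+\sum_{v\in S}\mathrm{in}(v)=\sum_{v\in V}\min(\mathrm{in}(v),\mathrm{out}(v))=D$, so the greedy cut weighs at least $W-D$. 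Combining with $D\le 2\epsilon W$ from the previous step yields greedy cut weight $\ge W-2\epsilon W=(1-2\epsilon)W$, as claimed.

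I do not expect a real obstacle here; the one point to be careful about is the tie case $\mathrm{out}(v)=\mathrm{in}(v)$, where $\min$ and $\max$ coincide for that vertex, so both inequalities above hold no matter how the greedy rule breaks the tie. The only genuinely substantive idea is the recognition that the \emph{same} quantity $D$ simultaneously forces a gap $W-\mathrm{OPT}\ge D/2$ (via the charging bound) and caps the greedy loss $W-(\text{greedy})\le D$; once that is noticed, the argument is pure bookkeeping. It is also worth noting in passing that the bound is tight, as the even directed cycle (every $\mathrm{in}(v)=\mathrm{out}(v)$, so $D=W$, $\epsilon=1/2$, and greedy may produce a cut of weight $0$) shows.
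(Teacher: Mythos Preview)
Your proof is correct and is essentially the paper's own argument, phrased more explicitly. The paper works with ``misplaced endpoints'' (an outedge whose tail is not in $S$, or an inedge whose head is in $S$); your quantity $D=\sum_v\min(\mathrm{in}(v),\mathrm{out}(v))$ is exactly the total weight of misplaced endpoints under the greedy placement, and both proofs hinge on the two facts that (i) misplaced endpoints for \emph{any} cut weigh at most twice the non-cut edges, and (ii) greedy minimizes misplaced endpoints, so its non-cut edges weigh at most $D\le 2\epsilon W$.
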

\begin{proof}
Consider a maximum cut in the graph of relative weight
$1-\epsilon$. An endpoint of an edge is said to be misplaced by an
algorithm if it is an outedge not placed in $S$ or an inedge that
is placed in $S$. An edge is not in the cut iff at least one of
its endpoints is misplaced. The relative weight of endpoints not
in the optimal
cut is $2\epsilon$.\\
Now, estimate the relative weight of edges not in the cut produced
by the greedy algorithm, using the edges' endpoints. The greedy algorithm
minimizes the weight of the endpoints not in the cut, but every endpoint
may correspond to an edge. Since the estimate is at most $2\epsilon$,
the relative weight of the edges is at most $2\epsilon$.
\end{proof}
Let us consider the max of the uniformly random algorithm and the
greedy algorithm. The approximation ratio is $\frac{2}{5}$: when
the weight of the maximal cut is at most $\frac{5}{8}$ of the
edges, the uniformly random algorithm will give an approximation
ratio of at least $\frac{2}{5}$ and when at most $\frac{3}{8}$ are
not in the cut, the greedy algorithm will give an approximation
ratio of at least $\frac{2}{5}$ ($\frac{1-2\epsilon}{1-\epsilon}$
is a decreasing function).

This approximation ratio is optimal: \[
\xymatrix{X\ar[d]_{2}\ar[dr]_{3}\\
Z & Y\ar@(u,r)[ul]^{3+\epsilon}}
\]
Selecting $X$ gives an optimal cut of weight $5$, but the greedy
algorithm will select both $X$ and $Y$, so the cut produced will
have weight 2. The uniformly random algorithm is expected to
produce a cut of weight $2+\frac{\epsilon}{4}$.

Let us now consider a mixed algorithm using the two algorithms. Let
$1-\epsilon$ be the relative weight of the cut. A mixed algorithm
using the greedy algorithm with probability $\gamma$ and the uniformly
random otherwise will give a cut of relative weight $\left(1-\gamma\right)\frac{1}{4}+\gamma\left(1-2\epsilon\right)$.\\
For $\gamma=\frac{1}{5}$, the mixed algorithm gives an
approximation ratio of $\frac{2}{5}$.

The equality of the approximation ratios of the max and mixed
algorithms is not accidental. Define a two player zero sum game:
Player A (for algorithm) has a finite set of pure strategies
corresponding to oblivious algorithms. Player G (for graph) has
pure strategies corresponding to all graphs. When both players use
pure strategies, player A is given a payoff equal to the
approximation ratio of the algorithm (corresponding to the
selected strategy by A) on the graph (corresponding to the
selected strategy by G). A max algorithm (using the maxexp notion
of approximation ratio) is the same as allowing player A to select
a pure strategy after player G has chosen a pure strategy. A mixed
algorithm is the same as allowing player A to use a mixed strategy.
By the minimax theorem, the best mixed strategy gives the same
payoff as having first G choose a mixed strategy (a distribution
over graphs), and then A chooses the best pure strategy against
this distribution.\footnote{The argument is a bit more delicate
because G has infinitely many pure strategies. A form of the
Minimax theorem holds also in this case since the payoffs are
bounded (see for example Theorem 3.1 in \cite{W45}).  For any
$\epsilon>0$ there is a value $\beta$, such that A has a mixed
strategy with payoff at least $\beta-\epsilon$, and G has a mixed
strategy limiting the payoff to be at most $\beta$.}~  Now the key
observation showing equality (up to arbitrary precision) between
mixed and max is that every mixed strategy for G can be
approximated by a pure strategy of G. Player G can choose a single
graph instead of a distribution of graphs: By losing at most
$\epsilon$ of the payoff (for any $\epsilon>0$), it can be assumed
that the distribution over the graphs is rational and finitely
supported. That is, the mixed strategy is
$\left(\frac{p_{1}}{M},\cdots,\frac{p_{n}}{M}\right)$, where
$p_{i},M\in\mathbb{N}$ and $\frac{p_{i}}{M}$ is the probability of
selecting the graph $G_{i}$. Construct $G^*$ from a disjoint union
of $G_{i}$ (for $1\leq i\leq n$), and multiply the weights of the
edges of the copy of $G_{i}$ in $G^*$ by the inverse of the weight
of the optimal cut in $G_{i}$ times $p_{i}$ (so that the weight of
the optimal cut in $G^*$ is 1). On $G^*$, no pure strategy of A
gives an approximation ratio better than $\beta+\epsilon$ (where $\beta$ is the value of the game).
Hence, given a set of oblivious algorithms, a max algorithm is not
better (up to arbitrary precision) than the best mixed algorithm.

Note that a mixed algorithm (over a set of oblivious algorithms)
is not an oblivious algorithm. We do not know if there are mixed
algorithms with worst case approximation ratios better than those
for oblivious algorithms.

\section{\label{ThreeSteps}An oblivious algorithm with $\frac{3}{8}$
approximation ratio}

In this section we will prove Theorem \ref{thm:There-Steps}.

There is another way to {}``mix'' between the greedy and uniform
algorithm. Consider the family of selection functions
$f_{\delta}$, for $0<\delta<\frac{1}{2}$, where \[
f_{\delta}\left(x\right)=\left\{ \begin{array}{cc}
0 & 0<x<\delta\\
\frac{1}{2} & \delta\leq x\leq1-\delta\\
1 & 1-\delta<x<1\end{array}\right.\]

Fix $\delta$ and consider a graph $G$. Divide its vertices into
two sets: $U$ (unbalanced), which will contain all vertices with
bias at most $\delta$ or more than $1-\delta$, and $B$ (balanced),
the rest of the vertices. We will modify (by adding vertices and
removing and adding edges) the graph to have a simpler structure
while making sure that the expected weight of the cut will not
increase (when using $f_{\delta}$), the weight of the optimal cut
will not decrease, and that the biases of all original vertices
will remain the same. Divide $U$ further into $U^{+}$, the set of
vertices of bias more then $1-\delta$, and $U^{-}$, the set of
vertices of bias less then $\delta$. Consider an edge
$\left(u,v\right)$ between $U^{+}$ to $U^{-}$ (in either
direction). Add a new vertex, $w$ and two new edges
$\left(u,w\right)$ and $\left(w,v\right)$, both with the same
weight of the edge $\left(u,v\right)$ and remove the edge
$\left(u,v\right)$.

After the transformation of $G$, there are no edges between
$U^{+}$ and $U^{-}$. By normalizing the weight of the graph, we
may assume that the sum of weights of edges is 1. Let $p^{+}$ be
the weight of endpoints in $U^{+}$, $p^{-}$ the weight of
endpoints in $U^{-}$, so there are $2-p^{-}-p^{+}$ endpoints in
$B$. Let $\gamma^{-}p^{-}$ be the weight of endpoints misplaced by
$f_{\delta}$ in $U^{-}$ and let $\gamma^{+}p^{+}$ be the weight of
endpoints misplaced by $f_{\delta}$ in $U^{+}$. Let $ap^{+}$ be
the weight of edges inside $U^{+}$, $bp^{+}$ be the weight of
edges from $B$ to $U^{+}$, and $cp^{+}$ the weight of edges from
$U^{+}$ to $B$. Since all of the edges from $U^{+}$ to $B$ are not
misplaced, we have $c=1-2\gamma^{+}+\epsilon^{+}$, where
$0\leq\epsilon^{+}\leq\gamma^{+}$. We also have $2a+b+c=1$ and
$a+b=\gamma^{+}$, so $a+c=1-\gamma^{+}$,
$a=\gamma^{+}-\epsilon^{+}$ and $b=\epsilon^{+}$. Similarly, we
have $\left(1-2\gamma^{-}+\epsilon^{-}\right)p^{-}$ edges from
$U^{-}$ to $B$ and $\epsilon^{-}p^{-}$ edges from $B$ to $U^{+}$.
The expected weight of the edges in the cut (produced cut by
$f_{\delta}$) inside $B$ is \[
\frac{2-p^{-}-p^{+}-\left(1-2\gamma^{-}+2\epsilon^{-}\right)p^{-}-\left(1-2\gamma^{+}+2\epsilon^{+}\right)p^{+}}{8}=\]
\[
\frac{1-\left(1-\gamma^{-}+\epsilon^{-}\right)p^{-}-\left(1-\gamma^{+}+\epsilon^{+}\right)p^{+}}{4}\]

The expected weight of the edges in the cut from $U$ to $B$ is\[
\frac{\left(1-2\gamma^{-}+\epsilon^{-}\right)p^{-}+\left(1-2\gamma^{+}+\epsilon^{+}\right)p^{+}}{2}\]

The optimal cut must misplace at least $\delta$ of the endpoints
in $B$ and endpoints of weight $\gamma^{+}p^{+}+\gamma^{-}p^{-}$
in $U$, so the optimal cut has weight of at most\[
1-\frac{\delta(2-p^{-}-p^{+})+\gamma p^{+}+\gamma^{-}p^{-}}{2}\]

Setting $\epsilon^{-}=0=\epsilon^{+}$ lowers the expected weight
of the cut without affecting the weight of the bound for the
optimal cut, so we will assume this is the case. Define
$p=p^{+}+p^{-}$ and $\gamma$ such that $\gamma
p=\gamma^{+}p^{+}-\gamma^{-}p^{-}$. $2\gamma p$ is the weight of
misplaced endpoints in $U$ so $\gamma\leq\delta$. The
approximation ratio is no worse than\[
\frac{1+\left(1-3\gamma\right)p}{4\left(1-\delta\right)+2p\left(\delta-\gamma\right)}\]

Setting $\delta=\frac{1}{3}$ and differentiating the approximation
ratio shown according to $\gamma$ gives numerator (the denominator
is positive for $p\geq0$, $\gamma\leq\delta$) \[
-3p\left(\frac{8}{3}+2p\left(\frac{1}{3}-\gamma\right)\right)-2p\left(1+\left(1-3\gamma\right)p\right)=\]
\[
-p\left(10+4p-12p\gamma\right)\]

For $0\leq\gamma\le\delta$ and $p\geq0$, the derivative is non
positive, so maximizing $\gamma$ will give the lowest
approximation ratio. For $\gamma=\delta=\frac{1}{3}$, the
approximation ratio is at least  $\frac{3}{8}$.

A graph with two
vertices $X,Y$ with edge of weight $\frac{2}{3}$ from $X$ to $Y$
and an edge of weight $\frac{1}{3}$ from $Y$ to $X$ shows
$\frac{3}{8}$ to be an upper bound on the approximation ratio of
$f_{\frac{1}{3}}$.

We remark that a slightly larger value of $\delta$ can give an
approximation ratio better than 0.375, and in fact better than
0.39. This can be verified using the linear programming approach
of Theorem~\ref{linear program lower bound}.

\section{Finding approximation ratios via linear programs}
\label{sec:LP}

\begin{proof}
[Proof of Theorem \ref{linear program lower bound}] For a given
step function $f$, we present a linear program that constructs a
graph with the worst possible approximation ratio for the
oblivious algorithm that uses $f$ as a selection function.

Suppose that the set of discontinuity points of the step function $f$
is $0=z_{0}\leq z_{1}\leq z_{2}\leq\cdots\leq z_{n-1}\leq
z_{n}=1$. An {\em isolated} point (that is neither left continuous
nor right continuous) is counted as two discontinuity points, for
a reason that will become apparent later. In the graph produced by
the LP, a certain subset $S$ of vertices will correspond to the
optimal cut in the graph, $T_{i}$ corresponds to the set of
vertices in $S$ with bias between $z_{i-1}$ and $z_{i}$, and
$T_{i+n}$ corresponds to the set of vertices not in $S$ with bias between
$z_{i-1}$ and $z_{i}$ (A vertex with bias $z_{i}$ for
some $i$ can be chosen arbitrarily to be in one of the sets). We
assume that the weights of the edges are normalized such that the
weight of the cut corresponding to $S$ is~1. The variable $e_{ij}$
denotes the weight of the edges from the set $T_{i}$ to the set
$T_{j}$. Let $l,u:\left\{ 1..2n\right\} \to\left\{ 1..n\right\} $
be such that $T_{i}$ contains the set of vertices of biases
between $z_{l\left(i\right)}$ and $z_{u\left(i\right)}$
($l\left(i\right)<u\left(i\right)$).

We have the following constraints:

\begin{itemize}

\item$\underset{\substack{i\leq n\\
j>n}
}{\sum}e_{ij}=1$ - The weight of the cut is 1.

\item$\forall i\; z_{l\left(i\right)}\underset{j}{\sum}\left(e_{ij}+e_{ji}\right)\leq
\underset{j}{\sum}e_{ij}\leq
z_{u\left(i\right)}\underset{j}{\sum}\left(e_{ij}+e_{ji}\right)$ -
The (average) bias of the vertices is correct.

\item$\forall i,j\; e_{ij}\geq0$
- the weight of edges must be nonnegative.

\end{itemize}

Note that $e_{ii}$ appears twice in
$\underset{j}{\sum}\left(e_{ij}+e_{ji}\right)$, since it
contributes to both outweight and inweight.

Let $p_{i}=p_{i+n}=f\left(\frac{z_{i-1}+z_{i}}{2}\right)$ be the
probability of selecting a vertex in the sets $T_{i}$ and
$T_{i+n}$. (Here we used the convention that isolated points $z_i$
appear twice.)
The expected weight of the cut is
$\underset{i,j}{\sum}p_{i}\left(1-p_{j}\right)e_{ij}$, and this is
the approximation ratio of the oblivious algorithm on the graph if
the cut corresponding to $S$ is optimal.
Minimizing $\underset{i,j}{\sum}p_{i}\left(1-p_{j}\right)e_{ij}$
subject to the constraints gives a graph on which $f$ attains its
worst approximation ratio. There are some minor technicalities
involved in formally completing the proof, and the reader is
referred to Section~\ref{app:LP} in Appendix for these details.
\end{proof}


The linear program produces a graph that shows the worst approximation
ratio of the selection function used. The dual of the linear program
will produce a witness that the approximation ratio of the selection
function is not worse than the value output. Every constraint of the
dual has at most five variables, so it is technically possible to
check that the approximation ratio is correct. However, even in the
case of the function family from Section \ref{thm:There-Steps}, where
a function has three steps, the linear program has 15 constraints
and 36 variables and the dual has 15 variables and 36 constraints.
Hence, verifying the results manually becomes tedious. It is possible
to decrease the number of variables in the primal by using symmetrization
of the graph (adding another graph with inverted edges), but this
will complicate the description of the linear program and will increase
the probability of errors in the programming, while only reducing
the number of variables by half.

We implemented the linear program above. To gain confidence in our
implementation, we checked the linear program on the step function
form Theorem \ref{thm:There-Steps} and indeed got that the
approximation ratio is $\frac{3}{8}$. We also checked the program
on the uniformly random algorithm and got $\frac{1}{4}$
approximation ratio, as expected.

We now prove theorem \ref{0.483 approximation}. Define $f\left(x\right)$ to be 0 for $x<0.25$, 1 for $x>0.75$, for
any $0\leq i<100$, if $0.25+0.005i<x<0.25+0.005\left(i+1\right)$,
$f\left(x\right)=0.005+0.01i$,
$f\left(\frac{1}{2}\right)=\frac{1}{2}$, and right or left
continuous on all other points. Using the corresponding linear program,         as defined in theorem \ref{linear program lower bound}, we have determined that  the
approximation ratio of $f$ is more than 0.4835 but not more than
0.4836 (according to the values of the primal and the dual), as
claimed in Theorem \ref{0.483 approximation}. $f$ can be seen as a
discretized version of the function $g\left(x\right)=\max\left\{
0,\min\left\{ 1,2\left(x-\frac{1}{2}\right)+\frac{1}{2}\right\}
\right\} $, and we believe that the approximation ratio of $g$ is
slightly better. In principle, it is possible to show this, using
a finer discretized version of the function. However, it is too
time consuming to check this, so we did not do it.

The proof of Theorem~\ref{search best} is based on a standard
discretization argument. See Section~\ref{app:LP} in the appendix
for more details.

\section{An upper bound on oblivious approximation ratios}

To prove Theorem \ref{upper bound} we construct two weighted
graphs, $G_1$ and $G_2$. To get a good approximation ratio for
$G_{1}$, the probability of selecting a vertex with bias
$\frac{5}{9}$ needs to be close to 1/2, whereas for $G_{2}$ it
needs to be far from 1/2. Combining the two graphs gives a single
graph that upper bounds the approximation ratio of any oblivious
algorithm.
We remark that a linear program similar to that of
Section~\ref{sec:LP} assisted us in constructing $G_1$ and $G_2$.
\begin{example}
\label{greedyexample}$G_{1}$ is the following weighted graph:

\[
\xymatrix{A\ar@(l,l)[d]^{1}\ar[dr]_{c_{2}} & B\ar[dr]_{c_{2}} & C\ar@(r,r)[d]_{1}\\
A'\ar[u]^{c} & B'\ar[u]^{c_{2}} & C'\ar[u]^{c}}
\]

where $c_{2}=c^{2}-1$.
\end{example}
Note that:
\begin{itemize}
\item The bias of $A$ and $A'$ is $\frac{c}{c+1}$.
\item The bias of $B$ and $B'$ is $\frac{1}{2}$.
\item The bias of $C$ and $C'$ is $\frac{1}{c+1}$.
\item There is a cut of weight $2c^{2}$ by selecting $A$, $B$, and $C$.
\end{itemize}
Let $\alpha$ be the probability of selecting a vertex with bias
$\frac{c}{c+1}$ for some oblivious algorithm (then the probability
of selecting a vertex with bias $\frac{1}{c+1}$ is $1-\alpha$).
Then the expected value of a solution produced by the algorithm is
\[
2\alpha\left(1-\alpha\right)\left(1+c\right)+\left(\alpha+\frac{1}{4}\right)\left(c^{2}-1\right)\]
And the approximation ratio is most\[
\frac{2\alpha\left(1-\alpha\right)\left(1+c\right)+\left(\alpha+\frac{1}{4}\right)\left(c^{2}-1\right)}{2c^{2}}\]

\begin{example}
$G_{2}$ is the following weighted graph:

\[
\xymatrix{D\ar[dr]_{c} & E\ar[dr]_{c}\\
 & E'\ar[u]^{c_{1}}\ar@(l,d)[ul]^{1} & F'\ar@(u,r)[lu]^{1}}
\]

where $c_{1}=c-1$.
\end{example}
Note that:
\begin{itemize}
\item The bias of $D$ is $\frac{c}{c+1}$.
\item The bias of $E$ and $E'$ is $\frac{1}{2}$.
\item The bias of $F'$ is $\frac{1}{c+1}$.
\item There is a cut of weight $2c$ by selecting $D$ and $E$.
\end{itemize}
Let $\alpha$ be the probability of selecting the vertex $D$ (and
$1-\alpha$ is the probability of selecting the vertex $F'$).

The expected weight of the cut is\[
c\alpha+\frac{c-1}{4}+1-\alpha\] The approximation ratio is\[
\frac{1+\left(\alpha+\frac{1}{4}\right)\left(c-1\right)}{2c}\]

\ignore{Using Theorem 3.1 in \cite{W45} and the game defined in
Section \ref{sec:Mix-versus-Max} when there are finitely many
graphs and infinitely many algorithms, we know that there is a
distribution on the graphs that has approximation ratio that is
not better than the minimal approximation ratio. Then, we can
combine the two graphs into a single graph with the same
approximation ratio.}

Consider a graph composed of one copy of $G_1$ and three copies of
$G_2$. The approximation ratio is at most \[
\frac{2\alpha\left(1-\alpha\right)\left(1+c\right)+\left(\alpha+\frac{1}{4}\right)\left(c^{2}-1\right)+3+3\left(\alpha+\frac{1}{4}\right)\left(c-1\right)}{2c^{2}+6c}\]
which, for fixed $c$, is a parabola with a maximal point.

For $c=1.25$, the approximation ratio is \[
\frac{213+372\alpha-288\alpha^{2}}{680}\]
the maximum is achieved at $\alpha=\frac{31}{48}$, and the value
there is $\frac{533}{1088}<0.4899$. Hence, no algorithm based on
oblivious algorithms (maximum of several oblivious algorithms or choosing
one to use according to some distribution) can achieve better approximation
ratio and this graph proves Theorem \ref{upper bound}.

The proof of Theorem \ref{nonsymmetric} appears in
Section~\ref{app:nonsymmetric} in the appendix.

\bibliographystyle{plain}
\bibliography{t}

\begin{appendix}

\section{Some remarks on 2-AND versus DICUT}
\label{sec:andcutremarks}

\ Despite the fact that the reduction from 2-AND to DICUT does not preserve  the weight of an optimal solution (a cut may have larger weight than the weight of any assignment), it
is possible to use a more generalized version of an oblivious
algorithm to generate only such cuts that can be transformed to
valid assignments. Instead of selecting $x$ and $\bar{x}$ to be in
$S$ independently, choose $x$ to be in $S$ according to the
selection function and set $\bar{x}$ to be in $S$ iff $x\notin S$.
The probability of $x$ and $\bar{x}$ to be in $S$ is the same as
before, and since there are no edges between then, the approximation ratio is the same,  due to the
linearity of expectation.

This can be generalized further: Instead of choosing vertices independently,  Divide any graph into disjoint
independent sets. Fix a selection function $f$. In each set, the marginal probability of a vertex being in $S$, will be the same as dictated by $f$ (however, the choices inside each independent set need not be independent). Then, due to the linearity of expectation, the
approximation ratio is the same.

For example, if the graph is $G_{\varphi}$ , and the selection
function decides that all vertices will be in $S$ with probability
$\frac{1}{2}$, by choosing both $x$ and $\bar{x}$ to be in $S$ or both not in $S$, we can find a cut with
the property that $x\in S$ iff $\bar{x}\in S$, with the same
expected weight of a valid assignment.

\section{Notes on linear programs}
\label{app:LP}

Here we provide some details required for a formal proof of
Theorem~\ref{linear program lower bound}.

Suppose that $r$ is the minimum value of the linear program and
that $\forall i\; e_{ii}=0$ for the optimal solution. Define a
vertex for each $T_{i}$ and an edge $\left(T_{i},T_{j}\right)$
with weight $e_{ij}$ for all $i,j$. This is a graph with the
property that $f$ achieves an approximation ratio $r$ if no vertex
has bias $z_{i}$ for some $i$. However, this is a minor obstacle;
add a vertex with total weight $\epsilon>0$ arbitrarily small,
with edges to or from all vertices with biases exactly $z_{i}$, so
that their biases will change slightly and the probability of
selecting the vertex $T_{i}$ will be $p_{i}$. The infimum of the
approximation ratios on the graphs (as $\epsilon\to0$) will be
$r$. Now, assume that for some $i$'s $e_{ii}>0$. Construct the
previous graph (without self loops). For $i$ such that $e_{ii}>0$,
split the vertex $T_{i}$ into two vertices, $A_{i}$ and $B_{i}$.
Every edge with an endpoint of $T_{i}$ will be split to two edges,
each with half the weight such that one will have endpoint $A_{i}$
instead of $T_{i}$, and the other will have endpoint $B_{i}$
instead of $T_{i}$. Add the edges $\left(A_{i},B_{i}\right)$ and
$\left(B_{i},A_{i}\right)$, each with weight $\frac{e_{ii}}{2}$.
All the constraints hold for the graph.

\begin{proof}
[Proof Sketch of Theorem \ref{search best}] Consider
$\mathcal{F}$, the family of $n^{n+1}$ antisymmetric step
functions that are constant on each of the $2n$ intervals of width
$\frac{1}{2n}$ of the unit interval, and the value on each of
those intervals is of the form $\frac{k}{n}$ with
$k\in\mathbb{N}$, and are left or right continuous between the
intervals. Also, in order to be antisymmetric, $\forall
f\in\mathcal{F}\: f\left(\frac{1}{2}\right)=\frac{1}{2}$. As a
corollary from the proof of the linear program, left or right
continuity of a step function does not change the approximation
ratio, so there are indeed only $n^{n+1}$ functions (due to
antisymmetry) to consider when looking at the approximation ratio.
Using $n^{n+1}$ linear programs (time
$\mathrm{poly}\left(n\right)n^{n}$) it is possible to find the
function with the best approximation ratio from the set.

It is possible that the best function from the set is not the best
possible selection function. However, it is close. Suppose that the best selection function is a step function that is
constant on the same intervals, but may have any value on those
intervals. Let $g$ be such a step function, and let $f$ be the
closest function (in $\ell_{\infty}$ distance) from $\mathcal{F}$.
Then, the probability an edge being in the cut when using $g$
instead of $f$ is at most $\mathrm{O}\left(\frac{1}{n}\right)$
larger, so the approximation ratio of $f$ is at most
$\mathrm{O}\left(\frac{1}{n}\right)$ lower than $g$.

Now, fix any selection function $h$. Let $g$ be a step function
that is constant on each of the $2n$ intervals of width
$\frac{1}{2n}$ of the unit interval such that for all
$k\in\mathbb{N}$ with $k\leq2n$,
$g\left(\frac{k}{2n}\right)=h\left(\frac{k}{2n}\right)$. Given any
{}``bad'' graph for $g$, by adding new edges and a single vertex,
it can be transformed into a {}``bad'' graph for $h$, and the
approximation ratio will be at most
$\mathrm{O}\left(\frac{1}{n}\right)$ lower. Thus, the
approximation ratio of $g$ is at most
$\mathrm{O}\left(\frac{1}{n}\right)$ lower than $h$.

Therefore, the approximation ratio of the best function from
$\mathcal{F}$ has approximation ratio worse by at most
$\mathrm{O}\left(\frac{1}{n}\right)$ than any oblivious algorithm.
\end{proof}

\section{Weighted versus unweighted graphs}
\label{app:weighted}

\begin{lem}
Any weighted graph $G$ with rational weights can be transformed to
an unweighted graph $G'$ such that for any oblivious algorithm the
approximation on $G'$ will not be better than the approximation
ratio on $G$.\end{lem}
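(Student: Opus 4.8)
The plan is to replace every weighted edge by a bundle of parallel unit-weight edges, after first clearing denominators so that all weights are positive integers. Concretely, since $G$ has rational weights, multiply all edge weights by a common denominator $N$; this scales both the expected cut and the optimal cut by the same factor $N$, so it changes no approximation ratio, and now every edge $(u,v)$ has an integer weight $w(u,v)\in\mathbb{N}$. Form $G'$ by deleting each edge $(u,v)$ and inserting $w(u,v)$ parallel edges from $u$ to $v$, each of weight $1$ (equivalently, to stay within simple graphs, subdivide each of these parallel copies once with a fresh intermediate vertex of bias $\tfrac12$; I would remark that subdivision of an edge $(u,v)$ by a new vertex $w$ with inedge $(u,w)$ and outedge $(w,v)$ preserves both the optimal cut and, by the antisymmetry of $f$ together with the computation already used in Section~\ref{ThreeSteps}, the expected cut — but for the lemma as stated multigraphs suffice).

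The key point is that $G'$ has exactly the same vertex set as $G$ (in the multigraph version) and every original vertex has exactly the same inweight and outweight in $G'$ as in $G$, hence the same bias; therefore an oblivious algorithm with selection function $f$ makes the identical (independent) random choices on $G$ and on $G'$. The two steps I would then carry out are: (i) the expected cut weights agree — for each original edge $(u,v)$, its contribution $w(u,v)\cdot f(\mathrm{bias}(u))(1-f(\mathrm{bias}(v)))$ to the expected cut of $G$ equals the total contribution of its $w(u,v)$ unit copies in $G'$, and summing over edges gives $\mathbb{E}[\text{cut}(G')] = \mathbb{E}[\text{cut}(G)]$; (ii) the optimal cut weights agree — any subset $S\subseteq V$ cuts the same total weight in $G'$ as in $G$ (a parallel copy of $(u,v)$ is cut by $S$ iff $(u,v)$ is), and since both graphs share the vertex set, $\mathrm{OPT}(G') = \mathrm{OPT}(G)$. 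Combining (i) and (ii), the approximation ratio of $f$ on $G'$ equals its approximation ratio on $G$, which in particular is not better.

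I do not expect a serious obstacle here; the only mild care needed is the clearing-of-denominators step (which must be done before the blow-up, and which is harmless because it rescales numerator and denominator of every ratio equally) and, if one insists on a simple graph rather than a multigraph, the observation that edge subdivision by a bias-$\tfrac12$ vertex is neutral for both $\mathrm{OPT}$ and the expected cut under an antisymmetric $f$ — this is exactly the type of local modification already exploited in Section~\ref{ThreeSteps}, so it can be invoked rather than re-derived. The statement of the lemma only asserts ``not better,'' so I would not even need the reverse inequality, though the argument in fact gives equality.
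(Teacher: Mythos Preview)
Your multigraph reduction (clear denominators, replace each integer-weight edge by that many parallel unit edges) is correct, but an unweighted multigraph is literally the same data as an integer-weighted simple graph, so this step alone establishes nothing beyond ``rational weights can be scaled to integers.'' The real content of the lemma is producing a \emph{simple} unweighted graph, and there your subdivision patch fails. Subdividing an edge $(u,v)$ through a fresh bias-$\tfrac12$ vertex changes its expected contribution from $f(r_u)\bigl(1-f(r_v)\bigr)$ to $\tfrac12\bigl(f(r_u)+1-f(r_v)\bigr)$, which is in general strictly larger, while $\mathrm{OPT}$ need not grow proportionally. Concretely: take $G$ with two vertices $u,v$, an edge $(u,v)$ of weight $3$ and an edge $(v,u)$ of weight $1$, and an antisymmetric $f$ with $f(\tfrac34)=0.8$. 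Then $r_u=\tfrac34$, $r_v=\tfrac14$, the expected cut on $G$ is $3\cdot 0.8\cdot 0.8+1\cdot 0.2\cdot 0.2=1.96$, $\mathrm{OPT}(G)=3$, ratio $\approx 0.653$. After your construction (three parallel unit copies of $(u,v)$, each subdivided by a fresh $w_i$, plus the single edge $(v,u)$) the biases of $u,v$ are unchanged, the expected cut becomes $3\cdot 0.4+3\cdot 0.4+0.04=2.44$, and one checks $\mathrm{OPT}(G')=3$ still. Hence the ratio on $G'$ is $2.44/3\approx 0.813$, strictly \emph{better} than on $G$, violating the lemma. Your appeal to Section~\ref{ThreeSteps} is a misreading: there subdivision is applied only to edges between $U^{+}$ and $U^{-}$, where the endpoint $f$-values are $0$ and $1$, and even then the text claims only that the expected cut does not increase and $\mathrm{OPT}$ does not decrease --- not that either is preserved.

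The paper avoids this by a different construction: it replaces each vertex $v$ by $M$ copies $v_1,\dots,v_M$ and each edge $(u,v)$ of integer weight $w_e\le M$ by a $w_e$-regular directed bipartite graph from the $u$-copies to the $v$-copies. Every copy $v_k$ then inherits exactly the bias of $v$, so the oblivious algorithm behaves identically; the expected cut scales by a fixed factor, while $\mathrm{OPT}$ scales by at least that factor (lift the optimal $S$ to all copies), giving the desired inequality directly on a simple unweighted graph with no subdivision needed.
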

\begin{proof}
Let $G=\left(V,E,w\right)$. Define $W$ to be $\max_{e\in
E}w\left(e\right)$. Define $w'$ to be $\frac{w}{W}$.
$\left(V,E,w'\right)$ is a weighted graph with rational weights,
the maximal weight is 1, and the approximation ratio is the same
as the approximation ratio for $G$. There are
$w_{e},M\in\mathbb{N}$ such that
$w'\left(e\right)=\frac{w_{e}}{M}$ for all $e\in E$.

Let $V'$ be composed of $M$ copies of $V$. $v\in V$ will be
identified with $\left\{ v_{1},\cdots,v_{M}\right\} \subseteq V$.
For every $e=\left(v,u\right)\in E$, create a $w_{e}$-regular
bipartite graph between $\left\{ v_{1},\cdots,v_{M}\right\} $ and
$\left\{ u_{1},\cdots,u_{M}\right\} $ in $E'$ (directed towards
$\left\{ u_{1},\cdots,u_{M}\right\} $).

$G'=\left(V',E'\right)$ satisfies the condition of the lemma.
\end{proof}

We can now prove Proposition \ref{weighted=00003Dunweighted}.
Given an arbitrary selection function and arbitrary $\epsilon >
0$, the proof of Theorem~\ref{search best} shows that the
selection function can be replaced by a step function that is
continuous on all irrational values, and the approximation ratio
deteriorates by at most $\epsilon$. Any weighted graph $G$ has
finitely many vertices and biases, and since the selection
function is continuous at all irrational values, $G$ can be
transformed to $G'$ with only rational weights and an
approximation ratio higher by at most $\epsilon$ (by changing each
irrational weight by a small value). Letting $\epsilon$ tend to~0
proves Proposition \ref{weighted=00003Dunweighted}.


\section{Selection functions that are not antisymmetric}
\label{app:nonsymmetric}

\begin{proof}[Proof of Theorem \ref{nonsymmetric}]

Let $G=\left(V,E,w\right)$ and create $G'=\left(V,E',w'\right)$
from $G$ by inverting all edges. That is, $E'=\left\{
\left(v,u\right)|\left(u,v\right)\in E\right\} $ and
$w'\left(\left(v,u\right)\right)=w\left(\left(u,v\right)\right)$.
Let $G''$ be the disjoint union of these graphs. Consider a
selection function $f$ that is not antisymmetric. Let
$g\left(x\right)=\frac{f\left(x\right)+1-f\left(1-x\right)}{2}$.
$g$ is antisymmetric. Let $\left(u,v\right)$ be an edge, where the
bias of $u$ is $r_{u}$ and the bias of $v$ is $r_{v}$. The
probability of selecting $\left(u,v\right)$ when using $g$ is
$g\left(r_{u}\right)\left(1-g\left(r_{v}\right)\right)$ or
$\frac{1}{4}\left(f\left(r_{u}\right)+1-f\left(1-r_{u}\right)\right)\left(f\left(1-r_{v}\right)+1-f\left(r_{v}\right)\right)$.
The expected weight contributed by $\left(u,v\right)$ and
$\left(u,v\right)$ when using $g$ is \[
\frac{1}{2}\left(f\left(r_{u}\right)+1-f\left(1-r_{u}\right)\right)\left(f\left(1-r_{v}\right)+1-f\left(r_{v}\right)\right)\]
 and when using $f$ the expected weight is $f\left(r_{u}\right)\left(1-f\left(r_{v}\right)\right)+f\left(1-r_{v}\right)\left(1-f\left(1-r_{u}\right)\right)$.
The advantage of using $g$ over $f$ is \[
\frac{1}{2}\left(1-f\left(r_{u}\right)-f\left(1-r_{u}\right)\right)\left(1-f\left(r_{v}\right)-f\left(1-r_{v}\right)\right)\]
which is positive if $\forall z\:
f\left(z\right)+f\left(1-z\right)\geq1$ or $\forall z\:
f\left(z\right)+f\left(1-z\right)\le1$.

Recall that the proof of Theorem \ref{upper bound} is based on a
graph whose vertices have biases $\frac{1}{2}$, $\frac{c}{c+1}$,
and $\frac{1}{c+1}$. Hence if
$f\left(\frac{1}{2}\right)=\frac{1}{2}$, the upper bound holds for
$f$, regardless of the antisymmetry of $f$.

If $f\left(\frac{1}{2}\right)=\frac{1}{2}+\delta$, since
$\left|1-f\left(x\right)-f\left(1-x\right)\right|\leq1$, the
approximation ratio can increase by at most $\delta$ times the
weight of all edges (compared to using the antisymmetric version
of the function). However, the approximation ratio for an even
cycle will be $\frac{1}{2}-2\delta^{2}$. Therefore, there is
$\gamma>0$ such that no approximation better than
$\frac{1}{2}-\gamma$ can be achieved to Max DICUT using oblivious
algorithms, even if the selection function is not
antisymmetric.\end{proof}

\end{appendix}

\end{document}